\newtheorem{definition}{Definition}[section]
\newtheorem{theorem}{Theorem}[section]
\newtheorem{proposition}{Proposition}[section]
\newtheorem{corollary}{Corollary}[section]
\newtheorem{remark}{Remark}[section]
\newtheorem{lemma}{Lemma}[section]
\numberwithin{equation}{section}
\title{From Kontsevich Graphs to Feynman graphs, a Viewpoint from the
       Star Products of Scalar Fields}
\author{Zhou Mai \footnote{address:Colleague of Mathematical Science, Nankai University, Weijin Road, Tianjin City, Republic China;
            email address: zhoumai@nankai.edu.cn}}
\begin{document}

\maketitle

\begin{abstract}
In the present paper a new approach to construct the star products
concerning the scalar fields is provided. Due to the structure
of the classical Hamiltonians, the star products will be developed
at three levels concerning functions, densities (fields) and functionals
respectively. 
The star product at level of functions is the starting point for our setting
which includes almost all information of the star products 
concerning the scalar fields and functionals. 
The point of the star product at level of functions
is that it only concerns the finite dimensional issue,
which is a Moyal-like star product on $\mathbb{C}^{\infty}(\mathbb{R}^{d})$ 
generated by a bi-vector field with abstract coefficients.  
Thus the Kontsevich graphs will play some roles naturally. 
Actually we prove that there is an ono-one correspondence
between a class of Kontsevich graphs and the Feynman graphs. Additionally the Wick theorem,
Wick power and the expectation of Wick-monomial are discussed in terms of the star product at level of functions.
Our construction can be considered as the generalisation of the star products in perturbative algebraic
quantum fields theory and twist product introduced in \cite{1},\cite{2}.
\end{abstract}

\tableofcontents

\section{Introduction}

The deformation quantisation of the fields is a infinite dimensional issue
essentially. Up to now, there are a lot of works about the deformation quantisation
in the infinite dimensional case (for example, see \cite{5},\cite{6},\cite{7},\cite{8},\cite{9},\cite{10},
where the list of references is not complete). In some sense the star products
in the infinite dimensional space were usually constructed as the copies of the classical
Moyal product, a typical point is that the 
partial derivatives in the classical Moyal bi-vector field are replaced by variational derivatives,
for example, Frechet derivative or others. If we focus on the deformation quantisation of 
the fields, for example, the case of scalar fields , we need to pay attention to the following
two facts. The first fact is the commutative relation (or Poisson bracket):

$$
\{\varphi(\textbf{x}),\varphi(\textbf{y})\}=K(\textbf{x},\textbf{y}),
$$
where $K(\textbf{x},\textbf{y})$ is some propagator and $\varphi(\textbf{x})$
is a scalar field in some physical theory. Above commutative relation suggests
the possibility of the Moyal-like product.
Another one is the variational calculation for a specific functional, for example

$$
F(\varphi)=\int f(\varphi(\textbf{x}_{1}),\cdots,
\varphi(\textbf{x}_{d}))d\textbf{x}_{1}\cdots d\textbf{x}_{d},
$$
the variation of $F(\varphi)$ can be calculated in terms of the partial derivatives of function
$\\$ $f(y_{1},\cdots,y_{d})$. This fact provides a possibility such that the infinite dimentional
calculations can be reduced to the finite dimentional situation.
Our setup is motivated by the facts mentioned above.

In the present article we will discuss a new approach to the deformation quantisation
of scalar fields in the covariant case, i.e. the Moyal-like star product.
An approach of the quantization of the scalar fields will
pass to Feynman amplitudes or Feynman diagrams with much possibility.
On the other hand, the Moyal-like star product will closely
connect with Kontsevich's gragh, where Poisson
bi-vector field will be reaplaced by a bi-vector field with
abstract coefficients. Thus, our construction
about the star product will result in the
connection between the Kontsevich graphs (see \cite{13}) and
the Feynman graphs. Other applications are discussed also. 
Now we make some explanation about function $f(y_{1},\cdots,y_{d})$
at level of terminology. In general $f(\varphi(\textbf{x}_{1}),\cdots,\varphi(\textbf{x}_{d}))$
(or more precisely $f(\varphi(\textbf{x}_{1}),\cdots,\varphi(\textbf{x}_{d}))dV$ where $dV$ is volume form)
is called density from the viewpoint of variational theory in classical fields theory. 
Here we need to distinguish between the $f(y_{1},\cdots,y_{d})$ and
$f(\varphi(\textbf{x}_{1}),\cdots,\varphi(\textbf{x}_{d}))$, so we call the function
$f(y_{1},\cdots,y_{d})$ the density function, or function for short.

Our approach to the star products is divided into three steps.
The first step is to construct  the star products at level of functions. 
As the second step, the star product of fields, or densities in the sense
mentioned above, can be constructed
from the first step simply. Finally, the star product at level 
of functionals can be costructed based on the second step.
We show our idea in the following table.
\begin{table}
\begin{center}
\begin{tabular}{c}
$f(x_{i})\star g(y_{j})$  \quad functions \\
   $\downarrow$   \\
$\begin{array}{c}
    f(\varphi)\star g(\varphi) \\
    = f(x_{i})\star g(y_{j})|_{x_{i}=\varphi(\cdot),y_{i}=\varphi(\cdot)}
  \end{array}$ 
 fields or densities  \\
$\downarrow$  \\
$\int f(\varphi)\star g(\varphi)$ \quad functionals
\end{tabular}
 \caption{default}
\end{center}
\label{defaulttable}
\end{table}
Our construction can be considered as a generalisation of covariant
deformation quantisation of the fields in perturbative algebraic quantum fields theory
(see \cite{7}, \cite{8}, \cite{9}) and twisted product introduced in  \cite{1}, \cite{2}.
Somehow the main outline of our construction is along the idea in our earlier work (see \cite{15}).

The basic starting point of our discussion is the construction of the star product
at level of functions. The deformation quantisation in the case of the fields
is the infinite dimensional issue basically. But in our approach, as a key point,
the construction of the star product at level of functions involves the finite dimensional issue only. 
Our discussion below will show that the star product of functions contains all algebraic and 
combinatorial information of deformation quantisation of the fields and functionals almost.
Actually, it will be showed that everything can be explicitly calculated based on
the calculations at level of functions almost.

Here the star product of functions is a  Moyal-like one in $\mathbb{R}^{d}$.   
The bi-vector field in the Moyal product is replaced by a bi-vector field
with abstract coefficients, this bi-vector field generates the star product
of the functions in our setting. With the help of the Moyal-like product 
in finite dimensional space our discussion
goes into the framework of Kontsevich naturally. We prove that for a special
class of the Kontsevich graphs, here we call that the graphs of Bernoulli type,
there is an one-one correspondence between the graphs of Bernoulli type
and the Feynman graphs. In this paper we consider only the Feynman
graphs without self-lines. It is well known that the Moyal product is the simplest
example in the theory of deformation quantisation on the Poisson manifolds,
thus the Kontsevich graphs involving the Moyal product should be the simplest
case. Our setting is completely parallel to the Moyal product from the viewpoint
of the Kontsevich graphs. Roughly speaking the set of the graphs of Bernoulli type
is generated by a special Bernoulli graph (see \cite{11}, \cite{12}) which may be the simplest,
 but non-trivial, graph even in Bernoulli graphs.
We will see that the forms of the graphs of Bernoulli type under the structure of product
of admissible graphs (see \cite{11}, \cite{12}) look like
the Feynman amplitudes very much. This similarity results in the existence
of one-one correspondence mentioned above. 

Moreover, as another application of our 
construction we discuss the various forms of 
Wick theorem, Wick power and expectation of Wick-monomial in terms of
the coordinates in $\mathbb{R}^{d}$ from the viewpoint of the star product
of functions. In the sense of the star product the Wick theorem, Wick power and
expectation of Wick-monomial for the case of scalar fields can be obtained from 
 their various forms mentioned above.
Observing the procedure to calculate the star product we find the Feynman
amplitudes arise from the bi-vector field, that explains also why the Kontsevich graphs
are relevant to the Feynman graphs.

This paper is organised as the following. In section 2 we discuss the star products at
level of functions. The definitions of star product and Poisson bracket are presented
and some properties are discussed. In section 3 we recall some contents of
the Kontsevich graphs including admissible graphs and their product,
Bernoulli graphs, et cetera(see \cite{11}, \cite{12}, \cite{13}). 
A combinatorial notation, adjacency matrix, is introduced.
In the end of this section we prove the existence of one-one correspondence
between the graphs of Bernoulli type and the Feynman graphs.
In section 4 we discuss the Wick theorem, Wick power and notion of expectation
of Wick-monomial from the viewpoint of star product of functions on $\mathbb{R}^{d}$.
Here everything is expressed in terms of functions, or special, coordinates on
$\mathbb{R}^{d}$. In section 5 we discuss the star products of the fields and
functionals based on the star product of functions on $\mathbb{R}^{d}$.

\section{The star products at level of functions}

In this section we discuss the star products of functions which plays the role of underlying structure 
about the star products of scalar fields and functionals, moreover, includes all of combinatorial and
algebraic information concerning the star products  of scalar fields almost.

\subsection{The star product with tensor form}

At first we introduce some notations. 
Let $\mathcal{A}$ be a commutative algebra over $\mathbb{R}$ (or $\mathbb{C}$)
with finite or countable generators, we consider a free $C^{\infty}(\mathbb{R}^{d})$ module on
$\mathcal{A}$ denoted by $C^{\infty}_{\mathcal{A}}$,

$$
C^{\infty}_{\mathcal{A}}(\mathbb{R}^{d})
=\bigoplus\limits _{i\in\mathcal{A}}C^{\infty}(\mathbb{R}^{d}).
$$
The elements in $C^{\infty}_{\mathcal{A}}$ are the linear combinations of the
elements in $\mathcal{A}$ with coefficients in $C^{\infty}(\mathbb{R}^{d})$.
The partial derivations on $C^{\infty}(\mathbb{R}^{d})$ can be extended to
$C^{\infty}_{\mathcal{A}}$ where the elements in $\mathcal{A}$ are viewed as constants, 
for example, we have $\partial_{i}(\lambda f(\textbf{x}))=\lambda\partial_{i}f(\textbf{x}),$
 $\lambda\in\mathcal{A}, f(\textbf{x})\in C^{\infty}(\mathbb{R}^{d})$,
 here we have used the short symbols, $\textbf{x}=(x_{1},\cdots,x_{d})$,
 $\partial_{i}=\partial_{x_{i}}$.
In the present article we focus on the situation of real scalar fields, the case of complex
ones are similar, thus we discuss the problems over real number field $\mathbb{R}$ below.

Now we consider the derivations of the tensor of functions. Let 
$f_{i}(\textbf{x})\in  C^{\infty}(\mathbb{R}^{d}),\,i=1,\cdots,m$.
We define the partial derivations for tensor of the functions 
$f_{1}(\textbf{x}_{1})\otimes\cdots\otimes f_{m}(\textbf{x}_{m})$
as following:
\begin{equation}
\partial_{i}^{(j)}(\bigotimes_{j=1}^{m}f_{j}(\textbf{x}_{j})) 
=f_{1}(\textbf{x}_{1})\otimes\cdots\otimes\partial_{i}
f_{j}(\textbf{x}_{j})\otimes
\cdots\otimes f_{m}(\textbf{x}_{m}),
\end{equation}
where the variables of $f_{j}(\textbf{x}_{j})$ 
($j=1,\cdots,m$) are denoted by 
$\textbf{x}_{j}=(x_{j}^{(1)},\cdots,x_{j}^{(d)})$.
$\partial_{i}^{(j)}$ acts on $j$th factor in above tensor
and it is the partial derivation with respect to
the $i$th component of $\textbf{x}_{j}$.

In this paper the star product at level of functions 
what we want to construct is Moyal-like
one. For simplicity, in the discussions below, we will
restrict our considerations in the situations of the functions
with one variable, i.e. each factor in the
tensor is a function with one variable.
Then, the formulas (2.1) will be of
the following forms:

\begin{equation}
\partial_{j}(\bigotimes_{j=1}^{m}f_{j}(\textbf{x}_{j})) 
=f_{1}(x_{1})\otimes\cdots\otimes f_{j}^{\prime}(x_{j})\otimes
\cdots\otimes f_{m}(x_{m}),
\end{equation}

Let $K=\{K_{ij}|K_{i,j}\in\mathcal{A},\,i,j\in\mathbb{Z}^{+}\}$,
then we have the definition of the star product as follows:  

\begin{definition}
Let $f_{i}(x_{i}),g_{j}(y_{j})\in C^{\infty}(\mathbb{R}),\,i=1,\cdots,m,\,j=1,\cdots,n$, their star product
with tensor form is defined by the following formula:

\begin{equation}
\begin{array}{ c}
   \left[(f_{1}(x_{1})\otimes\cdots\otimes f_{m}(x_{m}))\star_{K}
   (g_{1}(y_{1})\otimes\cdots\otimes g_{n}(y_{n}))\right]_{\otimes}  \\
 =\exp\{\hbar\mathcal{K}_{\textbf{x},\textbf{y}}\}
(f_{1}(x_{1})\otimes\cdots\otimes f_{m}(x_{m}))\otimes
   (g_{1}(y_{1})\otimes\cdots\otimes g_{n}(y_{n})), 
\end{array}
\end{equation}
where 

$$
\mathcal{K}_{\textbf{x},\textbf{y}}=
\sum_{ij}K_{i,m+j}\partial_{x_{i}}\otimes\partial_{y_{j}}.
$$
\end{definition}

\begin{remark}
$\\$
\begin{itemize}
\item In definition 2.1 the subscripts $i$ and $j$ indicate the positions
of the factors in the tensor, for example, index $j$ 
indicates $g_{j}(y_{j})$ which is
the $(m+j)$th factor in $f_{1}(x_{1})\otimes\cdots\otimes f_{m}(x_{m})\otimes
g_{1}(y_{1})\otimes\cdots\otimes g_{n}(y_{n})$.
Therefore $K_{i,m+j}$ concerns $i$th and $(m+j)$th factors.
\item Let

$$
\mathbf{C}_{\mathcal{A},\hbar}^{\infty}(\mathbb{R})
=\{\sum\limits_{n\geq 0}\hbar^{n}f_{n}(x)|
f_{n}(x)\in\mathbf{C}_{\mathcal{A}}^{\infty}(\mathbb{R})\}.
$$
Then the star product in definition 2.1 defines
a map

$$
\begin{array}{c}
\begin{array}{cccc}
\star_{K}: &
\underbrace{(\mathbf{C}^{\infty}(\mathbb{R})\otimes
	\cdots\otimes\mathbf{C}^{\infty}(\mathbb{R}))} & \times &
\underbrace{(\mathbf{C}^{\infty}(\mathbb{R})\otimes
	\cdots\otimes\mathbf{C}^{\infty}(\mathbb{R}))} \\
 & m-times &  & n-times
\end{array}
 \\  
\begin{array}{cc}
\longrightarrow &
 \underbrace{\mathbf{C}_{\mathcal{A},\hbar}^{\infty}(\mathbb{R})
	\otimes\cdots\otimes
	\mathbf{C}_{\mathcal{A},\hbar}^{\infty}(\mathbb{R})}.  \\
 & (m+n)-times 
\end{array}
\end{array}
$$
There is a natural and obvious way to extend the star 
product as a map 

$$
\begin{array}{cccc}
\star_{K}: &
\underbrace{\mathbf{C}_{\mathcal{A},\hbar}^{\infty}(\mathbb{R})
\otimes\cdots\otimes
\mathbf{C}_{\mathcal{A},\hbar}^{\infty}(\mathbb{R})} &
\longrightarrow &
\underbrace{\mathbf{C}_{\mathcal{A},\hbar}^{\infty}(\mathbb{R})
\otimes\cdots\otimes
\mathbf{C}_{\mathcal{A},\hbar}^{\infty}(\mathbb{R})}. \\
  & (m+n)-times &  & (m+n)-times
\end{array}
$$
\end{itemize}	
\end{remark}

To simplify the formula (2.3) we introduce short notations in the following way. 
Let $F_{m}(\textbf{x})=\bigotimes_{i}f_{i}(x_{i})$,
$G_{n}(\textbf{y})=\bigotimes_{j}g_{j}(y_{j})$.
Then the formula (2.3) can be denoted in a short way as following

\begin{equation}
      (F_{m}(\textbf{x})\star_{K}G_{n}(\textbf{y}))_{\otimes}    
      =\exp\{\hbar\mathcal{K}_{\textbf{x},\textbf{y}}\}(F_{m}(\textbf{x})\otimes G_{n}(\textbf{y})). 
\end{equation}

With the help of the formula (2.3) or (2.4), 
by a straightforward computation we know that the associativity is valid, i.e. we have

\begin{equation}
\begin{array}{cc}
      &\left[(F_{m}(\textbf{x})\star_{K}G_{n}(\textbf{y}))_{\otimes}\star_{K}H_{p}(\textbf{z})\right]_{\otimes}    \\
      =&\left[F_{m}(\textbf{x})\star_{K} (G_{n}(\textbf{y})\star_{K}H_{p}(\textbf{z}))_{\otimes}\right] _{\otimes} ,
\end{array}
\end{equation}
where $H_{p}(\textbf{z})=\bigotimes_{k=1}^{p}h_{k}(z_{k})$.
Actually, 

$$
\begin{array}{c}
\left[(F_{m}(\textbf{x})\star_{K}G_{n}(\textbf{y}))_{\otimes}
\star_{K}H_{p}(\textbf{z})\right]_{\otimes} \\
=\exp\{\hbar\mathcal{K}_{\textbf{x},\textbf{z}}
+\hbar\mathcal{K}_{\textbf{y},\textbf{z}}\}
\left(\exp\{\hbar\mathcal{K}_{\textbf{x},\textbf{y}}\}
(F_{m}(\textbf{x})\otimes G_{n}(\textbf{y}))\otimes
H_{p}(\textbf{z}) 
\right) \\
=\exp\{\hbar(\mathcal{K}_{\textbf{x},\textbf{y}}+
\mathcal{K}_{\textbf{x},\textbf{z}}+
\mathcal{K}_{\textbf{y},\textbf{z}})\}
(F_{m}(\textbf{x})\otimes G_{n}(\textbf{y})\otimes H_{p}(\textbf{z})).
\end{array}
$$
On the other hand,

$$
\begin{array}{c}
\left[F_{m}(\textbf{x})\star_{K} (G_{n}(\textbf{y})
\star_{K}H_{p}(\textbf{z}))_{\otimes}\right] _{\otimes} \\
=\exp\{\hbar(\mathcal{K}_{\textbf{x},\textbf{y}}+
\mathcal{K}_{\textbf{x},\textbf{z}}+
\mathcal{K}_{\textbf{y},\textbf{z}})\}
(F_{m}(\textbf{x})\otimes G_{n}(\textbf{y})\otimes H_{p}(\textbf{z})).
\end{array}
$$

Thus, the both sides of the formula (2.5) are of the following form
 
\begin{equation}
\exp\{\hbar(\mathcal{K}_{\textbf{x},\textbf{y}}+\mathcal{K}_{\textbf{x},\textbf{z}}+\mathcal{K}_{\textbf{y},\textbf{z}})\}
(F_{m}(\textbf{x})\otimes G_{n}(\textbf{y})\otimes H_{p}(\textbf{z})), 
\end{equation}
Explicitly, 

$$
\begin{array}{c}
\mathcal{K}_{\textbf{x},\textbf{y}}=
\sum_{ij}K_{i,m+j}\partial_{x_{i}}\otimes\partial_{y_{j}}\otimes id_{p}, \\
\mathcal{K}_{\textbf{x},\textbf{z}}=
\sum_{ik}K_{i,m+n+k}\partial_{x_{i}}
\otimes id_{n}\otimes\partial_{z_{k}}, \\
\mathcal{K}_{\textbf{y},\textbf{z}}=
\sum_{jk}K_{m+j,m+n+k}id_{m}\otimes\partial_{y_{j}}
\otimes\partial_{z_{k}}.
\end{array}
$$

Similar to the discussion in remark 2.1,
the subscripts $\textbf{x}$, $\textbf{y}$
and $\textbf{z}$ indicate the positions of the 
factors in the tensor.

\begin{remark}
If each factor of the tensor is a function 
in $\mathbf{C}^{\infty}(\mathbb{R}^{d})$,
in addition one needs to consider 
the positions of the components
of the variables. The star product can be generalized
to this situation in a natural way.
\end{remark}

\subsection{The ordinary star products }

Paralleling to definition 2.1 
we can define the star product in ordinary
sense. 

\begin{definition}
Let $f(x_{1},\cdots,x_{m})\in\mathbf{C}^{\infty}(\mathbb{R}^{m})$,
$g(y_{1},\cdots,y_{n})\in\mathbf{C}^{\infty}(\mathbb{R}^{n})$,
then

\begin{equation}
\begin{array}{c}
f(x_{1},\cdots,x_{m})\star_{K}g(y_{1},\cdots,y_{n}) \\
=\exp\{\mathcal{K}_{\textbf{x},\textbf{y}}\}
(f(x_{1},\cdots,x_{m})g(y_{1},\cdots,y_{n})),
\end{array}
\end{equation}
where

$$
\mathcal{K}_{\textbf{x},\textbf{y}}=\sum\limits_{i,j}K_{i,m+j}
\partial_{x_{i}}\partial_{y_{j}}.
$$
\end{definition}

It is easy to check that

\begin{equation}
(\textbf{m}\circ F_{m}(\textbf{x}))\star_{K}(\textbf{m}\circ G_{n}(\textbf{y}))
=\textbf{m}\circ(F_{m}(\textbf{x})\star_{K}G_{n}(\textbf{y}))_{\otimes},
\end{equation}
where $\textbf{m}$ denotes the point-wise multiplication of the functions,

$$
\textbf{m}:f_{1}(x_{1})\otimes\cdots\otimes f_{m}(x_{m})
\mapsto f_{1}(x_{1})\cdots f_{m}(x_{m}).
$$
Actually, by a straightforward calculation
one can check that

$$
(\textbf{m}\circ F_{m}(\textbf{x}))\star_{K}
(\textbf{m}\circ G_{n}(\textbf{y}))
=\sum\limits_{k\geq 0}\frac{\hbar^{k}}{k!}
(\sum\limits_{1\leq i\leq m,1\leq j\leq n}
K_{i,m+j}\partial_{x_{i}}\partial_{y_{j}})^{k}
\textbf{m}\circ F_{m}(\textbf{x})
\textbf{m}\circ G_{n}(\textbf{y}).
$$

From the formula (2.8), it is easy for us to check that
$$
\begin{array}{cc}
  &\left[(\textbf{m}\circ F_{m}(\textbf{x}))\star_{K}(\textbf{m}\circ G_{n}(\textbf{y}))\right]\star_{K} (\textbf{m}\circ H_{p}(\textbf{z})) \\
 =&\textbf{m}\circ((F_{m}(\textbf{x})\star_{K}G_{n}(\textbf{y}))_{\otimes})\star_{K}(\textbf{m}\circ H_{p}(\textbf{z}))  \\
=&\textbf{m}\circ\left[ ((F_{m}(\textbf{x})\star_{K}G_{n}(\textbf{y}))_{\otimes}\star_{K} H_{p}(\textbf{z})\right]_{\otimes},
\end{array}
$$
where $H_{p}(\textbf{z})$ is as above. Therefore, 
the associativity of the star product
defined in definition 2.2 is valid.
Furthermore we have

\begin{definition}
Let $F_{m}(\textbf{x}),G_{m}(\textbf{x})$
be as above, we define

\begin{equation}
(\textbf{m}\circ F_{m}(\textbf{x}))\star_{K}(\textbf{m}\circ G_{m}(\textbf{x}))
=(\textbf{m}\circ F_{m}(\textbf{x}))\star_{K}(\textbf{m}\circ G_{n}(\textbf{y}))|_{\textbf{x}=\textbf{y}}.
\end{equation}
\end{definition}

\begin{remark}
$\\$
\begin{itemize}
  \item Comparing with the Moyal product
$$ f(\textbf{x})\star g(\textbf{x})=f(\textbf{x})g(\textbf{x})+\hbar\sum\limits_{i,j}\alpha^{ij}
     \partial_{i}f(\textbf{x})\partial_{j}g(\textbf{x})+O(\hbar^{2}),$$
in the present paper the constant coefficients $\alpha^{ij}$ in Moyal star product are replaced by abstract elements $K_{ij}$
in $\mathcal{A}.$ However, similar to the case of the Moyal product, $\mathcal{K}$
plays the role of bi-vector field with coefficients in $\mathcal{A}$.  
  \item It is obvious that the star product defined by the formula (2.9) can be extended to 
  the case of $C^{\infty}_{\mathcal{A},\,\hbar}$, where
 $$C^{\infty}_{\mathcal{A},\,\hbar}=\{\sum\limits_{m\geqslant 0}\hbar^{m}F_{m}|
 F_{m}\in C^{\infty}_{\mathcal{A}},m\geqslant0\},$$ 
such that the star product $\star_{K}$ is a map from $C^{\infty}_{\mathcal{A},\,\hbar}\times C^{\infty}_{\mathcal{A},\,\hbar}$
to $C^{\infty}_{\mathcal{A},\,\hbar}$, or from $C^{\infty}_{\mathcal{A},\,\hbar}\otimes C^{\infty}_{\mathcal{A},\,\hbar}$
into itself.
  \item We can extend the star product defined as above to the case where functions depending 
  on some parameters. For example,
  $$
   f(t,\textbf{x})\star_{K}g(t,\textbf{y})=\exp\{\hbar\mathcal{K}\}( f(t,\textbf{x}) g(t,\textbf{y})),
  $$
where $t=(t_{1},\cdots,t_{k})$.
\end{itemize}
\end{remark}

More precisely, we have

$$
\begin{array}{c}
\textbf{m}\circ F_{m}(\textbf{x})\star_{K} \textbf{m}\circ G_{n}(\textbf{y}) \\
=\textbf{m}\circ F_{m}(\textbf{x})\textbf{m}\circ G_{n}(\textbf{y})
+\hbar\sum\limits_{i,j}K_{i,m+j}\partial_{i}
(\textbf{m}\circ F_{m}(\textbf{x}))
\partial_{j}(\textbf{m}\circ G_{n}(\textbf{y}))    \\
+\frac{\hbar^{2}}{2} \sum\limits_{i_{1},i_{2},j_{1},j_{2}}
K_{i_{1},m+j_{1}}K_{i_{2},m+j_{2}} 
\partial_{i_{1}}\partial_{i_{2}}(\textbf{m}\circ F_{m}(\textbf{x}))
\partial_{j_{1}}\partial_{j_{2}}
(\textbf{m}\circ G_{n}(\textbf{y}))+\cdots,
\end{array}
$$
thus we have

$$
\begin{array}{c}
\textbf{m}\circ F_{m}(\textbf{x})\star_{K}\textbf{m}\circ G_{n}(\textbf{y})
-\textbf{m}\circ G_{n}(\textbf{y})\star_{K}\textbf{m}\circ F_{m}(\textbf{x}) \\
=\hbar\sum\limits_{i\neq j}(K_{i,m+j}-K_{j,m+i})
\partial_{i}(\textbf{m}\circ F_{m}(\textbf{x}))\partial_{j}
(\textbf{m}\circ G_{n}(\textbf{y}))+O(\hbar^{2}).
\end{array}
$$
If 

$$
\textbf{m}\circ F_{m}(\textbf{x})\star_{K}\textbf{m}\circ G_{m}(\textbf{y})
=\textbf{m}\circ G_{m}(\textbf{y})\star_{K}\textbf{m}\circ F_{m}(\textbf{x})
$$
we say the star product $\star_{K}$ is commutative. It is obvious that we have
\begin{proposition}
The star products (2.9) are commutative iff the propagator matrix 
$K_{i,n+j}=K_{j,n+i}$.
\end{proposition}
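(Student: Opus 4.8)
The plan is to work directly from the explicit expansion displayed just before the statement. Recall that
$$
f(\textbf{x})\star_{K} g(\textbf{y})=f(\textbf{x})g(\textbf{y})+\hbar\sum\limits_{i,j}K_{ij}
      \partial_{i}f(\textbf{x})\partial_{j}g(\textbf{y})
      +\frac{\hbar^{2}}{2}\sum K_{i_{1}j_{1}}K_{i_{2}j_{2}}\partial_{i_{1}}\partial_{i_{2}}f(\textbf{x})\partial_{j_{1}}\partial_{j_{2}}g(\textbf{y})+\cdots,
$$
and more compactly $f\star_{K}g = \exp\{\hbar\mathcal{K}\}(f(\textbf{x})g(\textbf{y}))|_{\textbf{x}=\textbf{y}}$ with $\mathcal{K}=\sum_{ij}K_{ij}\partial_{x_i}\partial_{y_j}$. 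First I would prove the ``if'' direction: assume $K_{ij}=K_{ji}$ for all $i,j$. Then in the operator $\mathcal{K}^n = \sum K_{i_1 j_1}\cdots K_{i_n j_n}\,\partial_{x_{i_1}}\cdots\partial_{x_{i_n}}\partial_{y_{j_1}}\cdots\partial_{y_{j_n}}$, after the restriction $\textbf{x}=\textbf{y}$ each term becomes $K_{i_1 j_1}\cdots K_{i_n j_n}\,\partial_{i_1}\cdots\partial_{i_n}f \cdot \partial_{j_1}\cdots\partial_{j_n}g$ evaluated at a single point. Swapping the dummy index blocks $(i_1,\dots,i_n)\leftrightarrow(j_1,\dots,j_n)$ sends this term to $K_{j_1 i_1}\cdots K_{j_n i_n}\,\partial_{j_1}\cdots\partial_{j_n}f\cdot\partial_{i_1}\cdots\partial_{i_n}g = K_{i_1 j_1}\cdots K_{i_n j_n}\,\partial_{j_1}\cdots\partial_{j_n}f\cdot\partial_{i_1}\cdots\partial_{i_n}g$, using symmetry of $K$ and commutativity of partial derivatives; summing over all indices shows the $n$-th term of $f\star_K g$ equals the $n$-th term of $g\star_K f$. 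Hence $f\star_K g = g\star_K f$ term by term in $\hbar$.

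For the ``only if'' direction I would argue by contrapositive: suppose $K$ is not symmetric, so $K_{ab}-K_{ba}\neq 0$ in $\mathcal{A}$ for some pair $a\neq b$ (note $a=b$ is automatically symmetric, so necessarily $a\neq b$). From the displayed order-$\hbar$ commutator,
$$
f(\textbf{x})\star_{K} g(\textbf{x})-g(\textbf{x})\star_{K} f(\textbf{x})
   =\hbar\sum\limits_{i\neq j}(K_{ij}-K_{ji})\,\partial_{i}f(\textbf{x})\,\partial_{j}g(\textbf{x})+O(\hbar^{2}),
$$
it suffices to produce test functions killing all terms except one. Taking $f(\textbf{x})=x_a$ and $g(\textbf{x})=x_b$ (coordinate functions, viewed as elements of $C^\infty(\mathbb{R}^d)$), we get $\partial_i f=\delta_{ia}$, $\partial_j g=\delta_{jb}$, and all higher derivatives vanish, so the $O(\hbar^2)$ terms drop out entirely and the commutator equals exactly $\hbar(K_{ab}-K_{ba})\neq 0$. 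Therefore $\star_K$ is not commutative, completing the contrapositive.

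The only mild subtlety — and the step I would be most careful about — is making precise the sense in which ``$=0$'' is tested: the coefficients live in the abstract algebra $\mathcal{A}$, and $f\star_K g$ is a formal power series in $\hbar$ with coefficients in $C^\infty_{\mathcal{A}}$, so ``commutative'' must mean equality coefficient-by-coefficient in $\hbar$ and the nonvanishing of $K_{ab}-K_{ba}$ must be nonvanishing as an element of $\mathcal{A}$. Since $\mathcal{A}$ is a fixed commutative algebra and the coordinate functions $x_a,x_b$ are genuine elements of the module, the order-$\hbar$ coefficient $\hbar(K_{ab}-K_{ba})\partial_a x_a\,\partial_b x_b$ is a bona fide nonzero element of $C^\infty_{\mathcal{A},\hbar}$ whenever $K_{ab}\neq K_{ba}$ in $\mathcal{A}$. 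With that bookkeeping in place both directions are immediate, so the proof is essentially the two short computations above; no real obstacle remains beyond stating the formal-series conventions cleanly.
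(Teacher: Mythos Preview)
Your proof is correct and follows essentially the same approach as the paper, which simply declares the result ``obvious'' immediately after displaying the order-$\hbar$ commutator formula. You supply the details the paper omits: the explicit index-swap argument for all orders in the ``if'' direction, and the coordinate-function test for the ``only if'' direction.
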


For non-commutative case we define the Poisson bracket as following:

\begin{equation}
\{\textbf{m}\circ F_{m}(\textbf{x}),\textbf{m}\circ G_{m}(\textbf{x})\}_{K}
=\sum\limits_{i\neq j}(K_{i,m+j}-K_{j,m+i})
\partial_{i}(\textbf{m}\circ F_{m}(\textbf{x}))
\partial_{j}(\textbf{m}\circ G_{m}(\textbf{x})).
\end{equation}
The Poisson bracket can be extended to $C^{\infty}_{\mathcal{A}}$ also.
It is obvious that the Poisson brackets (2.10) is bi-linear and anti-symmetric,
additionally, are derivations for both of 
$\textbf{m}\circ F_{m}$ and $\textbf{m}\circ G_{m}$. The Jacobi
identity is valid for the Poisson brackets, i.e. we have:

$$
\{\{\textbf{m}\circ F_{m}(\textbf{x}),\textbf{m}\circ G_{m}(\textbf{x})\}
,\textbf{m}\circ H_{m}(\textbf{x})\}+\emph{cycles}=0.
$$

Now we extend the star product to the situation with several factors. We have
\begin{proposition}
Let $f_{i}(x_{i})\in C^{\infty}(\mathbb{R}),\,i=1,\cdots,m$,
we have
\begin{equation}
f_{1}(x_{1})\star_{K}\cdots\star_{K} f_{m}(x_{m}) 
=\textbf{m}\circ(f_{1}(x_{1})\star_{K}\cdots\star_{K} f_{m}(x_{m}))_{\otimes}
\end{equation}
where
$$
\begin{array}{c}
(f_{1}(x_{1})\star_{K}\cdots\star_{K} f_{m}(x_{m}))_{\otimes}   \\
=\exp\{\hbar\sum\limits_{i<j}\mathcal{K}_{ij}\}(f_{1}(x_{1})\otimes\cdots\otimes f_{m}(x_{m})),
\end{array}
$$
$\mathcal{K}_{ij}=K_{ij}\partial_{x_{i}}\otimes\partial_{x_{j}}$,
$\partial_{x_{k}}$ acts on $f_{k}(x_{k})$ as ordinary partial derivation, 
$1\leqslant k\leqslant m$.
\end{proposition}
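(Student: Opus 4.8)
The plan is to reduce the claim to the associativity statement (2.8) that has already been established for the tensor‑form star product, together with the definition (2.11)--(2.12) relating the ordinary star product to the tensor one via the multiplication map $\textbf{m}$. First I would prove the tensor‑form identity
$$
(f_{1}(\textbf{x}_{1})\star_{K}\cdots\star_{K} f_{m}(\textbf{x}_{m}))_{\otimes}
=\exp\Bigl\{\sum_{i<j}\mathcal{K}_{ij}\Bigr\}\bigl(f_{1}(\textbf{x}_{1})\otimes\cdots\otimes f_{m}(\textbf{x}_{m})\bigr)
$$
by induction on $m$. The base cases $m=1,2$ are immediate from (2.6)--(2.7). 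For the inductive step I would write the left‑hand side as $\bigl((f_{1}\star_{K}\cdots\star_{K} f_{m-1})_{\otimes}\star_{K} f_{m}(\textbf{x}_{m})\bigr)_{\otimes}$, which is legitimate because associativity (2.8) guarantees the value is independent of the bracketing. Applying the definition (2.7) of the tensor star product to the outermost $\star_{K}$ produces $\exp\{\hbar\mathcal{K}_{\textbf{x},\textbf{x}_{m}}\}$ acting on the tensor of the $(m-1)$‑fold product with $f_{m}$; here the key point is that $\partial_{i,\otimes^{m-1}}$ obeys the co‑Leibniz rule (2.4)--(2.5), so $\hbar\mathcal{K}_{\textbf{x},\textbf{x}_{m}}=\sum_{i<m}\mathcal{K}_{im}$ once one expands $\partial_{i,\otimes^{m-1}}$ over the $m-1$ factors exactly as in the second bullet of Remark 2.1 (formula (2.10)). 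Since the operators $\mathcal{K}_{im}$ for different $i$ commute with each other and with the $\mathcal{K}_{ij}$, $i<j\le m-1$ coming from the inductive hypothesis (all are constant‑coefficient commuting differential operators, the coefficients lying in the commutative algebra $\mathcal{A}$), the two exponentials merge into $\exp\{\sum_{i<j\le m}\mathcal{K}_{ij}\}$, completing the induction.

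Having the tensor‑form formula, the ordinary‑product statement (2.16) follows by applying $\textbf{m}$ to both sides: by Definition 2.3 and the iterated version of (2.11), $f_{1}(\textbf{x}_{1})\star_{K}\cdots\star_{K} f_{m}(\textbf{x}_{m})$ is by definition obtained from the tensor product by first forming $(\,\cdots\,)_{\otimes}$ and then applying $\textbf{m}$ (i.e.\ restricting all $\textbf{x}_{i}$ to a common $\textbf{x}$ after the differentiations have been performed). One subtlety to spell out is that the operators $\partial_{\mu,\textbf{x}_{i}}$ in each $\mathcal{K}_{ij}$ must act \emph{before} the diagonal restriction $\textbf{x}_{1}=\cdots=\textbf{x}_{m}=\textbf{x}$ is imposed; this is exactly what the passage from $(F_{m}(\textbf{x})\star_{K}G_{n}(\textbf{y}))_{\otimes}$ through $\textbf{m}\circ(\cdots)$ to Definition 2.3 encodes, and the associativity check displayed right after Definition 2.2 shows this is consistent under iteration.

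The main obstacle I anticipate is purely bookkeeping: keeping careful track of which tensor slot each $\partial_{i,\otimes^{k}}$ refers to when the associator (2.8) is used to re‑bracket, and verifying that the ``cross terms'' $\mathcal{K}_{im}$ generated at the last step are precisely the missing summands $\{\mathcal{K}_{ij}: i<j=m\}$ and nothing more. This is controlled entirely by the co‑product formula (2.5) — expanding $\triangle$ repeatedly distributes $\partial_{i}$ over all factors with multiplicity one — so no genuine difficulty arises, only the need to write the indexing cleanly. I would therefore present the induction compactly, citing (2.5), (2.8) and (2.10), and then dispatch (2.16) in one line via Definitions 2.2 and 2.3.
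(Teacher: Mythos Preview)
Your proposal is correct and matches the paper's approach: the paper states Proposition 2.2 without a separate proof, evidently regarding it as the $m$-fold extension of the associativity computation (2.8)--(2.9) already carried out, together with Definition 2.2 for the passage through $\textbf{m}$. Your inductive argument using (2.4)--(2.5), (2.8), and (2.10) simply makes this explicit, and the bookkeeping concern you flag is exactly the content of the check (2.9) in the three-factor case.
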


Observing the definition 2.1 we know that the different choice of propagator matrixes
determines the different star products. Now we discuss the connection between the star products
depending on the different propagator matrixes. We have
\begin{theorem}
For different propagator matrixes $K,K^{\prime}$ we have
\begin{equation}
(\textbf{m}\circ F_{m}(\textbf{x})\star_{K} \textbf{m}\circ G_{n}(\textbf{y}))_{\otimes}
=\exp\{\hbar(\mathcal{K}-\mathcal{K}^{\prime})\}
(\textbf{m}\circ F_{m}(\textbf{x})\star_{K^{\prime}} 
\textbf{m}\circ G_{n}(\textbf{y}))_{\otimes}.
\end{equation}
\end{theorem}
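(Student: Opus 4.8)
The plan is to exploit the fact that all bi-differential operators here are built from the mutually commuting partial derivations $\partial_{i,\textbf{x}}$ (acting on the first tensor slot) and $\partial_{j,\textbf{y}}$ (acting on the second tensor slot), with coefficients drawn from the \emph{commutative} algebra $\mathcal{A}$. Write $\mathcal{K}=\sum_{ij}K_{ij}\,\partial_{i,\textbf{x}}\otimes\partial_{j,\textbf{y}}$ and $\mathcal{K}'=\sum_{ij}K'_{ij}\,\partial_{i,\textbf{x}}\otimes\partial_{j,\textbf{y}}$, understood as operators on $C^{\infty}(\mathbb{R}^{d})\otimes C^{\infty}(\mathbb{R}^{d})$ (or on the $\hbar$-completed spaces $C^{\infty}_{\mathcal{A},\hbar,n}$). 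Since the $\partial_{i,\textbf{x}}$ commute among themselves, the $\partial_{j,\textbf{y}}$ commute among themselves, an $\textbf{x}$-slot operator commutes with a $\textbf{y}$-slot operator, and $\mathcal{A}$ is commutative, the first step is simply to observe that $[\mathcal{K},\mathcal{K}']=0$; consequently $\hbar(\mathcal{K}-\mathcal{K}')$ and $\hbar\mathcal{K}'$ commute as well, being $\hbar$ times $\mathcal{A}$-linear combinations of the commuting operators $\partial_{i,\textbf{x}}\otimes\partial_{j,\textbf{y}}$.

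Next I would invoke the elementary identity $\exp(A+B)=\exp(A)\exp(B)$ for commuting operators $A,B$, read in the ring of formal power series in $\hbar$ so that no convergence question arises. Taking $A=\hbar(\mathcal{K}-\mathcal{K}')$ and $B=\hbar\mathcal{K}'$ yields the operator identity $\exp\{\hbar\mathcal{K}\}=\exp\{\hbar(\mathcal{K}-\mathcal{K}')\}\,\exp\{\hbar\mathcal{K}'\}$. Applying both sides to $f(\textbf{x})\otimes g(\textbf{y})$ and using Definition 2.1, the left side is by definition $(f(\textbf{x})\star_{K} g(\textbf{y}))_{\otimes}$, while on the right $\exp\{\hbar\mathcal{K}'\}(f(\textbf{x})\otimes g(\textbf{y}))=(f(\textbf{x})\star_{K'} g(\textbf{y}))_{\otimes}$, which gives exactly $(f(\textbf{x})\star_{K} g(\textbf{y}))_{\otimes}=\exp\{\hbar(\mathcal{K}-\mathcal{K}')\}(f(\textbf{x})\star_{K'} g(\textbf{y}))_{\otimes}$. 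I would also note that $\mathcal{K}-\mathcal{K}'=\sum_{ij}(K_{ij}-K'_{ij})\,\partial_{i,\textbf{x}}\otimes\partial_{j,\textbf{y}}$ has exactly the same shape as $\mathcal{K}$, so $\exp\{\hbar(\mathcal{K}-\mathcal{K}')\}$ is well defined on the formal-power-series completion and the right-hand side makes sense; and that the identical argument applies verbatim to the general tensor-form product of $F_{m}(\textbf{x})$ and $G_{n}(\textbf{y})$, replacing $\partial_{i,\textbf{x}},\partial_{j,\textbf{y}}$ by $\partial_{i,\otimes^{m}},\partial_{j,\otimes^{n}}$.

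There is essentially no obstacle in this argument; the only point that warrants a sentence of care is the interpretation of $\exp$ of an unbounded differential operator. This is handled by working over $\mathbb{R}[[\hbar]]$ (equivalently, by noting that after evaluation on a fixed pair of smooth functions each power of $\hbar$ carries only a finite sum of derivatives), so that rearranging the exponential of commuting operators is legitimate term by term. Beyond that, the proof is a one-line consequence of the commutativity established in the first step.
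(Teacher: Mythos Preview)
Your argument is correct and follows exactly the same route as the paper's proof, which consists of the single observation that $\exp\{\hbar\mathcal{K}\}=\exp\{\hbar(\mathcal{K}-\mathcal{K}')\}\exp\{\hbar\mathcal{K}'\}$ for commuting operators. You have simply spelled out in more detail why $\mathcal{K}$ and $\mathcal{K}'$ commute and why the formal exponential identity is legitimate, which the paper leaves implicit.
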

\begin{proof}
Actually we have
$$\exp\{\hbar\mathcal{K}\}=\exp\{\hbar(\mathcal{K}-\mathcal{K}^{\prime})\}\exp\{\mathcal{K}^{\prime}\}.$$
\end{proof}
\begin{remark}
It is obvious that all formulas in above discussion are valid for elements in $C^{\infty}_{\mathcal{A},\,\hbar}$.
Thus we will only discuss the issues concerning the star products for smooth functions below. 
\end{remark}

\section{Kontsevich graphs and Feynman graphs}

In this section we will explain how the star products result in the Feynman amplitudes,
at same time, Kontsevich graphs result in the Feynman graphs naturally.

\subsection{Notations}

At beginning as preparations we recall some contents concerning the Kontsevich graphs simply, 
for more details we direct readers to \cite{13}.

\begin{definition}
An oriented graph  $\Gamma$ is a pair $(V_{\Gamma},E_{\Gamma})$ of two finite
sets such that $E_{\Gamma}$ is a subset of $V_{\Gamma}\times V_{\Gamma}$.
 Elements of $V_{\Gamma}$ are vertices of $\Gamma$, elements of $E_{\Gamma}$
 are edges of $\Gamma$. If $e=(v_{1},v_{2})\in E_{\Gamma}\subseteq V_{\Gamma}\times V_{\Gamma}$ 
 is an edge of $\Gamma$ then we say that $e$ stars at $v_{1}$ and ends at $v_{2}$.
\end{definition}

\begin{definition}
(\textbf{Admissible graphs}, \cite{13}, p.22) Admissible graph 
$G_{n,m}$ is an oriented graph with labels such that
\begin{itemize}
  \item The set of vertices $V_{\Gamma}$ is 
  $\{v_{1}^{(1)},\cdots,v_{n}^{(1)}\}\sqcup\{v_{1}^{(2)},\cdots,v_{m}^{(2)}\}$ 
  where $n,m\in \mathbb{Z}_{\geqslant 0},\, 2m+n-2\geqslant 0$; 
  vertices from $\{v_{1}^{(1)},\cdots,v_{n}^{(1)}\}$
  are called vertices of the first type, 
  vertices from $\{v_{1}^{(2)},\cdots,v_{m}^{(2)}\}$ are called vertices
  of the second type.
  \item Every edge $e=(v_{1},v_{2})\in E_{\Gamma}$ stars at a vertex of the first type,
  $v_{1}\in \{v_{1}^{(1)},\cdots,v_{n}^{(1)}\}$.
  \item There are no loops, i.e. no edges of the type $(v,v)$.
  \item For every vertex $k\in\{1,\cdots,n\}$ of the first type, the set of edges
  $$Star(k)=\{(v_{1},v_{2})\in E_{\Gamma}|v_{1}=k\}$$
  starting from $k$, is labeled by symbols $\{e_{k}^{1},\cdots,e_{k}^{\#Star(k)}\}$.
\end{itemize}
\end{definition}

In other articles the vertices of the first type are also called internal vertices
and vertices of the second type are called boundary vertices. About the operation
of graphs we have

\begin{definition}
 (see  \cite{11}, p.7 and \cite{12}, p.22)If $\Gamma_{1}\in G_{n,m}$, $\Gamma_{2}\in G_{n^{\prime},m}$,
 we define the product $\Gamma_{1}\Gamma_{2}\in G_{n+n^{\prime},\,m}$ as the graph obtained
 from disjoint union of two graphs by identification of the vertices of the second type. 
We call this product Kathotia product. We define $\Gamma\emptyset=\Gamma$.
\end{definition} 

It is easy to see that the product of admissible graphs defined above is commutative.
For convenience we define the embedding of the admissible graphs $G_{n,m}\hookrightarrow G_{n,m^{\prime}}$,
$m^{\prime}\geqslant m$, or, extend a graph in $G_{n,m}$ to a graph in $G_{n,m^{\prime}}$.

\begin{definition}
Let $\Gamma\in G_{n,m}$, the all vertices of the second type 
$\Gamma$ labeled by $v_{1}^{(2)},\cdots,v_{m}^{(2)}$,
for a subset of $I=\{i_{1},\cdots,i_{m}\}\subseteq\{1,\cdots,m^{\prime}\}$, 
$1\leqslant \bar{i}_{1}<\cdots<\bar{i}_{m}\leqslant \bar{m^{\prime}}$, we extend $\Gamma$ in the following
way:
\begin{itemize}
  \item identifying vertex $v_{j}^{(2)}$ with vertex $v_{i_{j}}^{(2)}$, $j=1,\cdots,m$.
\end{itemize}
Above procedure define an embedding $\iota_{I}:G_{n,m}\hookrightarrow G_{n,m^{\prime}}$,
and we denote new graph by $\Gamma_{I}$.
We call $I$ the position of $\iota_{I}$
or $\iota_{I}(\Gamma)$.
\end{definition} 

\begin{remark}
Combining the definitions 3.3 and 3.4 we can consider the product of general admissible graphs. 
For instance, let $\Gamma\in G_{n,m}$,
$\Gamma^{\prime}\in G_{n^{\prime},m^{\prime}}$, we take $m_{1}=\max\{m,m^{\prime}\}$
and choose two positions $(i_{1},\cdots,i_{m})$, $(i^{\prime}_{1},\cdots,i^{\prime}_{m^{\prime}})$,
then the product 
$$
\Gamma_{(i_{1},\cdots,i_{m})}\Gamma^{\prime}_{(i^{\prime}_{1},\cdots,i^{\prime}_{m^{\prime}})}
\in G_{n+n^{\prime},m_{1}}
$$
makes sense.
\end{remark}

\subsection{Kontsevich's rule for star product with tensor form}

\paragraph{Bernoulli graphs}

Now we embed the star product defined in definition 2.1 into the Kontsevich's framework. 
The Moyal product is one generated by constant Poisson bi-vector field, it may be trivial
case from the viewpoint of Kontsevich theory. Similar to the case of the Moyal product, 
the basic graph is Bernoulli graph (see \cite{11}, \cite{12})
$b\in G_{1,2}$ which endows one vertex of the first type, two vertices of the second type
named by left and right ones respectively,
and two edges starting at the unique vertex of the first type,
ending at left (or right) vertex of the second type denoted by $e^{L}$
and $e^{R}$ respectively. We call $b$ the basic Bernoulli graph. We consider

$$
\begin{array}{cc}
      b^{n}=& \underbrace{b\cdots b}, \\
      & n-times
\end{array}
$$
where the product of the graphs is Kathotia product in definition 3.3, 
thus we know that $b^{n}\in G_{n,2}$. $b^{n}$ is simple graph even in $G_{n,2}$,
actually, there are no edges connecting the different vertices of the first type in $b^{n}$. 
Here we do not distinguish any two vertices of the
first type in $b^{n}$. In fact, if
the vertices of the first type in $b^{n}$ are labeled by $\{1,\cdots,n\}$,
and we make a permutation of indices, 
the new graph is isomorphic to $b^{n}$.
Each vertex of the first type of $b^{n}$ assigns two edges starting
at this vertex and ending at different verteies of 
the second type. The edges of $b^{n}$
are labeled by symbols $e_{1}^{L},e_{1}^{R},\cdots, e_{n}^{L},e_{n}^{R}$,
where $e_{k}^{L}$ (or $e_{k}^{R}$)
denote the edge starting at $k$-th 
vertex of the first type and ending at left (or right) vertex of the second type.
It is obvious that each vertex of the first type
corresponds to a pair $(e^{L}_{k},e^{R}_{k})$.

We now turn to a general situation $G_{n,m}$,
which means the Kontsevich's graphs
with $m$ vertiecs of the second type.
A Bernoulli graph with $m$ vertices of
the second type has two features.
There are no edges connecting two
vertices of the first type. For each
vertex of the first type, there just be
two edges starting at this vertex and
ending at different vertices of the
second type. We indicate the $m$ vertices
of the second type by $\{1,\cdots,m\}$,
or, the set of the vertices of the second type
is denoted by $\{v_{1},\cdots,v_{m}\}$.
Then, with the help of Kathotia product,
any Bernoulli graph can be
generated by the Bernoulli graph $b_{ij}$
($1\leq i<j\leq m$) which is endowed with an
unique vertex of the first type and
two vertices of the second type, there are
two edges of $b_{ij}$ starting at the unique vertex 
of the first type and ending at ith and jth
vertices of the second type respectively.
Actually, $b_{ij}$ can be regarded as
the embedding of $b$ with position $\{i,j\}$,
that is $b_{ij}=\iota_{\{i,j\}}b$.
Let $\Gamma\in G_{n,m}$ be a Bernoulli graph, it is easy
to check that

\begin{equation}
\Gamma=\prod\limits_{1\leq i<j\leq m}
b^{m_{ij}}_{ij}.
\end{equation}
In the above formula, the product is
Kathotia product, $m_{ij}\in \mathbb{N}$
($1\leq i<j\leq m$) with $\sum_{ij}m_{ij}=n$,
when $m_{ij}=0$ $b_{ij}^{m_{ij}}=\emptyset$.
Let $\sigma\in\mathbb{P}_{m}$ be permutation
on $\{1,\cdots,m\}$. We define

$$
\sigma(\Gamma)=\prod\limits_{1\leq i<j\leq m}
b^{m_{\sigma(i)\sigma(j)}}_{\sigma(i)\sigma(j)}.
$$
Similar to the discussion in the situation of $b^{n}$,
for fixed $i$ and $j$ ($1\leq i,j\leq m$), we do not
distinguish any two vertices of the first type
which connect to both $i$th and $j$th vertices of the second
type. Under this assumption, it is obvious
that for two Bernoulli graphs $\Gamma_{1},\Gamma_{2}\in G_{n,m}$,
$\Gamma_{1}$ is isomorphic to $\Gamma_{2}$
denoted by $\Gamma_{1}\sim\Gamma_{2}$
if and only if
there is a $\sigma\in\mathbb{P}_{m}$ such that
$\Gamma_{1}=\sigma(\Gamma_{2})$.
We know $\sim$ is an equivalent relation.

Let 

$$
B_{n,m}=\{\Gamma\in G_{n,m}|\Gamma\,\,is\,\,a\,\,Bernoulli\,\,graph\},\,\,
B_{m}=(\bigcup\limits_{n}B_{n,m})\cup\{\emptyset\}.
$$
Then, we have

\begin{proposition}
Under Kathotia product $B_{m}$ 
is a monoid with generators
$\{b_{ij}\}_{1\leq i<j\leq m}$.
\end{proposition}

 It is obvious that
$B_{m}/\sim$ is a monoid also.

\begin{corollary}
$\mathbf{Span}_{\mathbb{R}}(B_{m})$ 
(or $\mathbf{Span}_{\mathbb{C}}(B_{m})$) 
is an algebra over $\mathbb{R}$ 
(or $\mathbb{C}$) with generators
$\{b_{ij}\}_{1\leq i<j\leq m}$.
\end{corollary}

\paragraph{Adjacency matrices}

We now turn to the adjacency matrices.

\begin{definition}
A adjacency matrix is a symmetric matrix with 
non-negative integer entries. Here we make an additional
restriction such that the entries on 
main diagonal are zeros, i.e. for an adjacency
matrix $M=(m_{ij})$ of order $m$, we have $m_{ij}\in\mathbb{N}$
$m_{ij}=m_{ji},\,m_{ii}=0$ ($1\leq i,j\leq m$). 
We call $\frac{1}{2}\sum_{ij}m_{ij}$ the
degree of $M$ denoted by $degM$. Let
$M_{adj}(m,\mathbb{N})$ denote the set of all 
adjacency matrices of order $m$.
\end{definition}

Let $M(i,j)=(m_{kl})_{m\times m}$ satisfying
$m_{kl}=\delta_{ik}\delta_{jl}$, where
$i<j,\,k<l$. Then $M(i,j)$ is a permutation matrix 
which is in $M_{adj}(m,\mathbb{N})$ obviously.

\begin{proposition}
$M_{adj}(m,\mathbb{N})$ is a monoid with generators
$\{M(i,j)\},\,1\leq i<j\leq m$.
\end{proposition}

\begin{proof}
By definition 3.5 we know $0\in M_{adj}(m,\mathbb{N})$.
It is ovious that for $M_{1},M_{2}\in M_{adj}(m,\mathbb{N})$,
$M_{1}+M_{2}\in M_{adj}(m,\mathbb{N})$. On the other hand,
for each $M\in M_{adj}(m,\mathbb{N})$, we have
$M=\sum_{i,j}m_{ij}M(i,j)$. Because $m_{ij}\in\mathbb{N}$,
we know that

$$
\begin{array}{cc}
m_{ij}M(i,j)= & \underbrace{M(i,j)+\cdots+M(i,j)}. \\
   & m_{ij}-times
\end{array}
$$
Up to now we have 
proved the conclusion.
\end{proof}

Let $M_{1},M_{2}\in M_{adj}(m,\mathbb{N})$,
we say $M_{1}\sim M_{2}$, if there exists
a permutation matrix $P$ of order $m$ such
that $M_{1}=PM_{2}P$. $\sim$ is an
equivalent relation obviously. It is easy to check that
$M_{adj}(m,\mathbb{N})/\sim$ is a monoid also.

Starting from a given adjacency matrix 
$M=(m_{ij})\in M_{adj}(m,\mathbb{N})$ 
with $degM=n$, one can construct a Bernoulli graph
$\Gamma_{M}=\prod_{1\leq i<j\leq m}b_{ij}^{m_{ij}}$
with $n$ vertices of the first type and
$m$ vertices of the second type.
The previous construction results in a 
map 

\begin{equation}
\mathcal{A}_{\mathcal{B}}:M_{adj}(m,\mathbb{N})
\longrightarrow B_{m};\,
\mathcal{A}_{\mathcal{B}}(M)=\Gamma_{M},\,
M\in M_{adj}(m,\mathbb{N}).
\end{equation}
$\mathcal{A}_{\mathcal{B}}$ is an one-one
correspondence between the adjacency matrices
of order $m$
and Bernoulli graphs with $m$ vertices of
the second type.

It is easy to check that

\begin{equation}
\Gamma_{M_{1}+M_{2}}=
\Gamma_{M_{1}}\Gamma_{M_{2}},\,
M_{1},M_{2}\in M_{adj}(m,\mathbb{N}).
\end{equation}

In summary, we reach the following
conclosion:

\begin{proposition}
The following maps

\begin{itemize}
\item 

$$
\mathcal{A}_{\mathcal{B}}:
M_{adj}(m,\mathbb{N})\longrightarrow B_{m},
$$
\item

$$
\mathcal{A}_{\mathcal{B}}:
\mathbf{Span}_{\mathbb{K}}(M_{adj}(m,\mathbb{N}))
\longrightarrow\mathbf{Span}_{\mathbb{K}}(B_{m}),
\,\,\mathbb{K}=\mathbb{R}\,\,or\,\,\mathbb{C}, 
$$
\end{itemize}
are homomorphisms.
\end{proposition}

\begin{remark}
$\mathcal{A}_{\mathcal{B}}$ induces
a homomorphism $\mathcal{A}_{\mathcal{B}}:
M_{adj}(m,\mathbb{N})/\sim
\longrightarrow B_{m}/\sim$ also.
\end{remark}

\paragraph{Kontsevich's rule}

We now begin to discuss the Kontsevich's rule under
our consideration. Here the star product works
on the tensor of the functions space as the following

$$
\begin{array}{c}
\underbrace{\mathbf{C}^{\infty}_{\hbar}(\mathbb{R})
\otimes\cdots\otimes\mathbf{C}^{\infty}_{\hbar}(\mathbb{R})}. \\
m-times
\end{array}
$$
If we indicate the factors in above tensor
by $\{1,\cdots,m\}$ according to the
order from left to right, then the $i$th
factor corresponds to the $i$th vertex
of the second type of the Bernoulli
graphs in $B_{m}$. Let
$\mathcal{K}=\{K_{ij}|K_{ij}\in\mathcal{A},
1\leq i<j\leq m\}$, $\partial_{i}$ denote
the derivation acting on $i$th factor
in the above tensor.

Under the setting of this paper, it is enough
to consider Bernoulli graphs when we discuss
Kontsevich's rule. Recalling the previous discussions
concerning Bernoulli graphs, $B_{m}$ 
is a monoid with generators $\{b_{ij}\}$,
it is sufficient for us to define Kontesvich's
rule on $b_{ij}$ ($1\leq i<j\leq m$), that is

\begin{equation}
\mathcal{U}(b_{ij},\mathcal{K})
=\mathcal{K}_{ij},
\,1\leq i<j\leq m,
\end{equation}
where $\mathcal{K}_{ij}=K_{ij}\partial_{i}\otimes\partial_{j}$
which is same as one in proposition 2.2.
For the general Bernoulli graphs, for example,
$\Gamma=\sum_{1\leq i<j\leq m}b_{ij}^{m_{ij}}\in B_{m}$,
Kontsevich's rule can be extended in the
following way,

\begin{equation}
\mathcal{U}(\Gamma,\mathcal{K})=
\prod\limits_{1\leq i<j\leq m}
\mathcal{U}(b_{ij},\mathcal{K})^{m_{ij}}
=\prod\limits_{1\leq i<j\leq m}
\mathcal{K}_{ij}^{m_{ij}}.
\end{equation}
More precisely,

$$
\prod\limits_{1\leq i<j\leq m}\mathcal{K}_{ij}^{m_{ij}}=
\prod\limits_{1\leq i<j\leq m}K_{ij}^{m_{ij}}
\partial_{1}^{\alpha_{1}}\otimes\cdots
\otimes\partial_{m}^{\alpha_{m}},
$$
where $\alpha_{1}=\sum_{j>1}m_{1j}$, 
$\alpha_{m}=\sum_{i<m}m_{im}$,
$\alpha_{i}=\sum_{j<i}m_{ji}+
\sum_{j>i}m_{ij}$ ($1<i<m$).
The previous discussions result
in the following conclusion:

\begin{proposition}
Let $\Gamma_{1},
\Gamma_{2}\in B_{m}$, we have

\begin{equation}
\mathcal{U}(\Gamma_{1}\Gamma_{2},\mathcal{K})=
\mathcal{U}(\Gamma_{1},\mathcal{K})
\mathcal{U}(\Gamma_{2},\mathcal{K}).
\end{equation}
\end{proposition}

Let

$$
D_{\mathcal{A},m}=
\{\sum\limits_{|\alpha|\leq k}a_{\alpha}
\partial_{1}^{\alpha_{1}}\otimes\cdots\otimes
\partial_{m}^{\alpha_{m}}|a_{\alpha}\in\mathcal{A},\,
k\in\mathbb{N}\},
$$
where $\alpha=(\alpha_{1},\cdots,\alpha_{m})\in \mathbb{N}^{m}$
are multiple indices, $|\alpha|=\alpha_{1}+\cdots+\alpha_{m}$.
Then the formula (3.6) means that Kontsevich's rule
induces a homomorphism

\begin{equation}
\mathcal{U}(\cdot,\mathcal{K}):
\mathbf{Span}_{\mathbb{R}}(B_{m})\rightarrow D_{\mathcal{A},m}.
\end{equation}
Furthermore

\begin{equation}
\mathcal{U}(\cdot,\mathcal{K})\circ
\mathcal{A}_{\mathcal{B}}:
M\mapsto \mathcal{K}_{M},
\end{equation}
where $\mathcal{K}_{M}=\prod_{1\leq i<j\leq m}
\mathcal{K}_{ij}^{m_{ij}}$.

Now we introduce the generating function of
the Bernoulli graphs in $B_{m}$ with form
as the following:

$$
g_{B_{m}}(t)=\exp\{t\sum_{1\leq i<j\leq m}b_{ij}\}.
$$
By a straightforward calculation we get

\begin{lemma}
\begin{equation}
\exp\{t\sum_{1\leq i<j\leq m}b_{ij}\}=
\sum\limits_{M\in M_{adj}(m,\mathbb{N})}
\frac{t^{degM}}{M!}\Gamma_{M},
\end{equation}
where $M!=\prod_{1\leq i<j\leq m}m_{ij}!$.
\end{lemma}

\begin{proof}
Noting that

$$
\exp\{t\sum_{1\leq i<j\leq m}b_{ij}\}=
\sum\limits_{k\geq 0}\frac{t^{k}}{k!}
(\sum_{1\leq i<j\leq m}b_{ij})^{k},
$$
and

$$
(\sum_{1\leq i<j\leq m}b_{ij})^{k}=
\sum\limits_{m_{ij},\sum m_{ij}=k,i<j}
\frac{k!}{\prod_{i,j}m_{ij}!} 
\prod\limits_{1\leq i<j\leq m}
b_{ij}^{m_{ij}},
$$
we have

$$
\exp\{t\sum_{1\leq i<j\leq m}b_{ij}\}=
\sum\limits_{k\geq 0}t^{k}
\sum\limits_{M,degM=k}
\frac{1}{M!}\Gamma_{M}=
\sum\limits_{M\in M_{adj}(m,\mathbb{N})}
\frac{t^{degM}}{M!}\Gamma_{M}.
$$

\end{proof}

Recalling Kontsevich's rule 
$\mathcal{U}$ is a homomorphism, we have

$$
\mathcal{U}(\exp\{\hbar\sum_{1\leq i<j\leq m}b_{ij}\},\mathcal{K}) 
=\sum\limits_{M\in M_{adj}(m,\mathbb{N})}
\frac{\hbar^{degM}}{M!}
\mathcal{U}(\Gamma_{M},\mathcal{K}).
$$
Furthermore, combining with the formula (3.4)
we have

$$
\begin{array}{c}
\mathcal{U}(\exp\{\hbar\sum_{1\leq i<j\leq m}b_{ij}\},\mathcal{K}) \\
=\sum\limits_{M\in M_{adj}(m,\mathbb{N})}
\frac{\hbar^{degM}}{M!}\mathcal{U}(\Gamma_{M},\mathcal{K}) \\
=\sum\limits_{M\in M_{adj}(m,\mathbb{N})}
\frac{\hbar^{degM}}{M!}\mathcal{K}_{M} \\
=\exp\{\hbar\sum\limits_{1\leq i<j\leq m}\mathcal{K}_{ij}\}.
\end{array}
$$

In summary, with the help of Kontsevich's rule,
the star product in proposition 2.2 can 
be expressed in the following way.

\begin{theorem}
\begin{equation}
\begin{array}{c}
\mathcal{U}(\exp\{\hbar\sum_{1\leq i<j\leq m}b_{ij}\},\mathcal{K}) \\
=\exp\{\hbar\sum\limits_{1\leq i<j\leq m}\mathcal{K}_{ij}\} 
=\sum\limits_{M\in M_{adj}(m,\mathbb{N})}
\frac{\hbar^{degM}}{M!}\mathcal{K}_{M}.
\end{array}
\end{equation} 
\end{theorem}

\paragraph{From Kontsevich graphs to Feynman graphs}

Now we turn to the Feynman amplitudes and Feynman graphs. 
Firstly, we consider the multiple star product.
Recalling proposition 2.2 we know that

\begin{equation}
\begin{array}{c}
        (f_{1}(\textbf{x}_{1})\star_{K}\cdots\star_{K} f_{m}(\textbf{x}_{m}))_{\otimes}  \\
        =\exp\{\hbar\sum\limits_{i<j}\mathcal{K}_{ij}\} 
        ((f_{1}(\textbf{x}_{1})\otimes\cdots\otimes f_{m}(\textbf{x}_{m}))
\end{array}
\end{equation}
where $\mathcal{K}_{ij}$ are bi-vector fields 
acting on $f_{i}(\textbf{x}_{i})$
and $f_{j}(\textbf{x}_{j})$, $i<j$, that is

$$
\begin{array}{c}
\mathcal{K}_{ij}(f_{1}(\textbf{x}_{1})\otimes\cdots\otimes f_{m}(\textbf{x}_{m}))=\\
K_{i,j}(f_{1}(\textbf{x}_{1})\otimes\cdots\otimes
\partial_{i}f_{i}(\textbf{x}_{i})\otimes\cdots 
\otimes\partial_{j}f_{j}(\textbf{x}_{j})\otimes\cdots
\otimes f_{m}(\textbf{x}_{m})).
\end{array}
$$
In this sense $\mathcal{K}_{ij}$ denotes 
a poly-differential operator.

Recalling the discussions about the
generating function of the Bernoulli graphs,
we have

\begin{equation}
\exp\{\hbar\sum\limits_{i<j}\mathcal{K}_{ij}\}
=\sum\limits_{M\in M_{adj}(m,\mathbb{N})}
\frac{\hbar^{degM}}{M!}\mathcal{K}_{M},
\end{equation}
Here $\mathcal{K}_{M}$
are poly-differential operators with coefficients in $\mathcal{A}$.
But the form of $\mathcal{K}_{M}=\prod_{i<j}\mathcal{K}_{ij}^{m_{ij}}$ 
and form of the Feynman amplitudes
are very much alike. It is seems that here $\mathcal{K}_{ij}$ play the role of the 
propagators. In fact for the special choice of $f_{i}(\textbf{x})$, $\mathcal{K}_{ij}$
contributes the propagator indeed, we will discuss that in section 4 and section 5 furthermore.
Therefore we call $\prod_{i<j}\mathcal{K}_{ij}^{m_{ij}}$ the generalised Feynman amplitude.
It is worth to point-out that the generalised Feynman amplitudes appearing in the coefficients
of the star product.
Kontsevich's rule as mentioned above suggests that
the Kontsevich graphs should result in the Feynman graphs. 

Recalling discussion about the connection between the adjacency matrices and
the Feynman amplitudes, it is obvious that there is a way to build the
connection between the Bernoulli graphs and Feynman graphs. 
Now we get one of our main consequence immediately.

\begin{theorem}
There is a one-one correspondence between the Bernoulli graphs 
and Feynman graphs without loops.
\end{theorem}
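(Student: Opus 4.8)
The plan is to exhibit explicit, mutually inverse maps between the set of graphs of Bernoulli type (in $\bigcup_{k,m} G_{k,m}$, up to the natural labelling equivalence) and the set of Feynman graphs without self-lines, and to check that they are inverse to one another. The bridge object is the adjacency matrix: I have already noted in the discussion preceding the statement that $M\mapsto b_M$ is a bijection from $m\times m$ adjacency matrices with $\deg M = 2k$ onto the graphs of Bernoulli type in $G_{k,m}$ (an adjacency matrix is symmetric with zero diagonal, so every nonzero entry $m_{ij}$, $i<j$, contributes the factor $b_{ij}^{m_{ij}}$, i.e. $m_{ij}$ internal vertices each joined by one edge to boundary vertex $i$ and one to boundary vertex $j$), and Remark~3.2 records the classical fact (following C.~Brouder~\cite{1}) that $M\mapsto$ (Feynman graph on $m$ vertices with $m_{ij}$ edges between vertices $i$ and $j$) is a bijection onto Feynman graphs on $m$ labelled vertices without self-lines. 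Composing these two bijections gives the desired correspondence.

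Concretely, first I would fix the combinatorial data: a graph of Bernoulli type $\Gamma=b_M\in G_{k,m}$ has $k$ vertices of the first type, each of valence exactly $2$ (two outgoing edges), $m$ vertices of the second type, and no edge joins two first-type vertices. Second, I would describe the forward map $\Phi$: given such a $\Gamma$, read off for each pair $i<j$ of second-type vertices the number $m_{ij}$ of first-type vertices whose two edges land on $\{i,j\}$; this recovers $M$ from $\Gamma$ (well-defined precisely because each first-type vertex has valence $2$, so it determines an unordered pair of boundary targets, possibly with the understanding that "$i$ and $j$" are distinct since there are no loops in an admissible graph). Then set $\Phi(\Gamma)$ to be the Feynman graph $F_M$ on vertex set $\{1,\dots,m\}$ with $m_{ij}$ edges between $i$ and $j$ and no self-lines. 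Third, I would describe the inverse map $\Psi$: given a Feynman graph $F$ without self-lines on $m$ vertices, form its adjacency matrix $M=(m_{ij})$ (symmetric, zero diagonal by the no-self-line hypothesis), and set $\Psi(F)=b_M$, which by Proposition~3.1 equals $\prod_{i<j}b_{ij}^{m_{ij}}$ as a product of admissible graphs. Fourth, I would verify $\Phi\circ\Psi=\mathrm{id}$ and $\Psi\circ\Phi=\mathrm{id}$: both compositions factor through the identity on the set of adjacency matrices, using in one direction that $b_M\mapsto M$ is inverse to $M\mapsto b_M$ (this is essentially the content of the remark before the theorem, made precise by the valence-$2$ observation), and in the other direction the Brouder bijection between adjacency matrices with zero diagonal and self-line-free Feynman graphs.

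I would also address the mild bookkeeping issues so that the correspondence is genuinely one-to-one rather than one-to-many: on the Bernoulli side, admissible graphs come with labels on $\mathrm{Star}(k)$ (the two edges at each first-type vertex are named, e.g.\ $e_k^L, e_k^R$), and first-type vertices are labelled $1,\dots,k$; I would note that since $\mathcal{K}_{ij}=\mathcal{K}_{ji}$ acts symmetrically (the two legs of $b_{ij}$ play interchangeable roles under $\mathcal{U}_{\mathcal{K}}$, because $\mathcal{K}$ is the same bi-vector on both outputs) and since the product of admissible graphs in Definition~3.3 is commutative, the graph $b_M$ is well-defined independently of these auxiliary labels — equivalently, I work with isomorphism classes of admissible graphs of Bernoulli type. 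On the Feynman side I take graphs on $m$ labelled vertices, matching the $m$ boundary vertices; the number $m$ and $k=\tfrac12\deg M$ are preserved, so the correspondence respects the natural gradings. The main obstacle is not any deep computation but getting this equivalence-relation/labelling matter exactly right, so that "graph of Bernoulli type" and "Feynman graph without self-lines" are each counted with the right multiplicity; once phrased in terms of adjacency matrices, the bijection itself is immediate from Proposition~3.1 and Remark~3.2.
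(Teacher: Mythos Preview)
Your proposal is correct and follows essentially the same approach as the paper: the paper's proof gives the direct rule (second-type vertices $\mapsto$ Feynman vertices, each factor $b_{ij}\mapsto$ an edge between $i$ and $j$), while you route the same bijection explicitly through the adjacency matrix $M$, which is exactly the bridge object the paper has set up in the paragraphs preceding the theorem. Your treatment is in fact more careful than the paper's about the labelling/equivalence bookkeeping on the Bernoulli side, which the paper simply leaves implicit.
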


\begin{proof}
With the help of Kontsevich's rule

$$
\mathcal{U}(\cdot,\Gamma_{M})=
\mathcal{K}_{M},\,M\in M_{adj}(m,\mathbb{N}),\,degM=k,
$$
for a given Bernoulli graph 
$\Gamma_{M}=\prod_{i<j}b_{ij}^{m_{ij}}\in B_{k,m}$, 
it reduces a Feynman graph by the following rule:
\begin{itemize} 
  \item every vertex of the second type is assigned to a vertex of Feynman graph,
  \item every $b_{ij}$ is assigned to a edge of Feynman graph connecting $i$-th  
  and $j$-th vertices of Feynman graph.
\end{itemize}
Consequently, according to above rule, 
from the graph $\prod_{i<j}b_{ij}^{m_{ij}}\in B_{k,m}$
we get a Feynman graph with $m$ vertices, $k$ edges, here there are $m_{ij}$
edges between the $i$-th and $j$-th vertices, $\sum_{i<j}m_{ij}=k$.

Conversely, from a given Feynman graph we can get a graph of Bernoulli type
in an obvious way.
\end{proof}

The graph $b_{M}$ have same form as Feynman amplitudes. 
Due to the correspondence between the Bernoulli graph $b_{ij}$ and bi-vector field
$\mathcal{K}_{ij}$, the graph $\prod_{i<j}b_{ij}^{m_{ij}}$ can be viewed as graphical
version of the Feynman amplitudes.

\section{Wick theorem and Wick power}

In this section we want to discuss the Wick theorem and Wick power 
in terms of the star product at level of functions in $C^{\infty}(\mathbb{R})$.
Therefore we focus on the situation of star product with special form
as $f_{1}(x_{1})\star_{K}\cdots\star_{K} f_{d}(x_{d})$, where 
$f_{i}(\cdot)\in C^{\infty}(\mathbb{R}),\,i=1,\cdots,d$.
Recalling proposition 2.2 we know that

\begin{equation}
    f_{1}(x_{1})\star_{K}\cdots\star_{K} f_{d}(x_{d})
    =\exp\{\hbar\sum\limits_{i<j}K_{ij}\partial_{i}\partial_{j}\}
    f_{1}(x_{1})\cdots f_{d}(x_{d}).
\end{equation}

By the notation $\mathcal{K}_{ij}
=K_{ij}\partial_{i}\otimes\partial_{j}$
and discussions concerning the generating
function of Bernoulli graphs we have

$$
\exp\{\hbar\sum\limits_{i<j}\mathcal{K}_{ij}\}=
\sum\limits_{M\in M_{adj}(d,\mathbb{N})}
\frac{\hbar^{degM}}{M!}\mathcal{K}_{M}.
$$
More precisely we have 

\begin{proposition}
\begin{equation}
\begin{array}{c}
f_{1}(x_{1})\star_{K}\cdots\star_{K} f_{d}(x_{d})   \\
=\sum\limits_{M\in M_{adj}(d,\mathbb{N})}
\frac{\hbar^{degM}}{M!}K_{M}\partial^{\alpha} 
(f_{1}(x_{1})\cdots f_{d}(x_{d})),
\end{array}
\end{equation}
where $\alpha_{i}=\sum_{j}m_{ij}$ ($i=1,\cdots,d$),
and $K_{M}=\prod_{i<j}K_{ij}^{m_{ij}}$.
\end{proposition}

In the formula (4.2) the factors $\prod\limits_{i<j}K_{ij}^{m_{ij}}$ reduced from 
$\prod\limits_{i<j}\mathcal{K}_{ij}^{m_{ij}}$ are very close to
the original Feynman amplitudes in the standard quantum fields theory.

Now we turn to the discussion of Wick power. In this case, it is necessary
for us to consider the following star product:
$$
   \begin{array}{c}
      \underbrace{f(x)\star_{K}\cdots\star_{K} f(x)} ,   \\
       l-times  
\end{array}
$$
where $f(\cdot)\in C^{\infty}(\mathbb{R})$ and $i=1,\cdots,d$.
By definition 2.3 we know

$$
f(x)\star_{K}\cdots\star_{K} f(x)
=(f(x_{1})\star_{K}\cdots\star_{K} f(x_{l}))|_{x_{1}=\cdots=x_{l}=x}.
$$
Similar to the formula (4.2) we have:
\begin{proposition}
\begin{equation}
\begin{array}{c}
     \begin{array}{c}
      \underbrace{f(x)\star_{K}\cdots\star_{K} f(x)} ,   \\
       l-times  
\end{array}       \\
       =\sum\limits_{k\geqslant 0}\frac{\hbar^{k}}{k!}
         \sum\limits_{degM=2k}
    \begin{pmatrix}
       k   \\
       m_{12} ,\cdots,m_{l-1,l}
\end{pmatrix}
K_{M}f^{(\alpha_{1})}(x)\cdots f^{(\alpha_{l})}(x).
\end{array}
\end{equation}
\end{proposition} 

Particularly, if we take $K_{ij}=\Lambda$ for
all $1\leq i<j\leq m$, where $\Lambda\in\mathcal{A}$,
then we have

$$
\begin{array}{c}
\begin{array}{c}
\underbrace{f(x)\star_{K}\cdots\star_{K} f(x)} ,   \\
l-times  
\end{array}       \\
=\sum\limits_{k\geqslant 0}\frac{(\hbar\Lambda)^{k}}{k!}
\sum\limits_{degM=k}
\begin{pmatrix}
k   \\
m_{12} ,\cdots,m_{l-1,l}
\end{pmatrix}
f^{(\alpha_{1})}(x)\cdots f^{(\alpha_{l})}(x).
\end{array}
$$

Furthermore, if we take $f(x_{i})=x_{i}$ we have

\begin{corollary}
\begin{equation}
\begin{array}{cc}
      \underbrace{x_{i}\star_{K}\cdots\star_{K} x_{i}} &   
      =\sum\limits_{k=0}^{[\frac{l}{2}]}\frac{l!}{2^{k}k!(l-2k)!}\,\hbar^{k}\,\Lambda^{k}\,x_{i}^{l-2k}.\\
       l-times &   
\end{array}
\end{equation}
\end{corollary}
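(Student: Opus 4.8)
The plan is to read off Corollary 4.1 from Proposition 4.2 by specialising the density function to $f(t)=t$ and carrying out the resulting combinatorial collapse. First I would record that for $f(t)=t$ one has $f^{(0)}(x_i)=x_i$, $f^{(1)}(x_i)=1$, and $f^{(\alpha)}(x_i)=0$ for every $\alpha\geqslant 2$. Hence, in the sum over $l\times l$ adjacency matrices $M=(m_{pq})$ with $\deg M=2k$ occurring in (4.3), a matrix $M$ contributes nothing unless every row sum $\alpha_p=\sum_q m_{pq}$ is at most $1$. Since $M$ is symmetric, has non-negative integer entries and zero diagonal, this constraint says precisely that $M$ is the adjacency matrix of a partial matching of the vertex set $\{1,\dots,l\}$, i.e. a choice of $k$ pairwise disjoint unordered pairs — the number of edges being $k$ because $\deg M=2k$ — leaving $l-2k$ unmatched vertices. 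In particular $l-2k\geqslant 0$, so $0\leqslant k\leqslant[\tfrac{l}{2}]$, which fixes the range of summation in (4.4).

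Next I would evaluate the two data attached to such an $M$. All non-zero entries of a matching equal $1$, so the multinomial coefficient $\binom{k}{m_{12},\dots,m_{l-1,l}}$ degenerates to $k!/(1!\cdots 1!)=k!$; and the product $f^{(\alpha_1)}(x_i)\cdots f^{(\alpha_l)}(x_i)$ equals $x_i^{\,l-2k}$, contributing a factor $x_i$ for each of the $l-2k$ unmatched vertices and a factor $1$ for each matched vertex. Consequently the coefficient of $\hbar^k K_{ii}^k x_i^{\,l-2k}$ generated by the $k$-th term of (4.3) is $\tfrac{1}{k!}\cdot k!$ times the number $N(l,k)$ of size-$k$ partial matchings on $l$ points, i.e. it is simply $N(l,k)$.

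The only arithmetic input left is the count $N(l,k)=\dfrac{l!}{2^k\,k!\,(l-2k)!}$, obtained in the usual way: select the $2k$ matched points in $\binom{l}{2k}$ ways and pair them up in $(2k-1)!!=\dfrac{(2k)!}{2^k k!}$ ways, so $N(l,k)=\binom{l}{2k}\dfrac{(2k)!}{2^k k!}=\dfrac{l!}{2^k\,k!\,(l-2k)!}$. Substituting this back into (4.3) yields exactly the right-hand side of (4.4).

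There is no real obstacle: once Proposition 4.2 is available the argument is bookkeeping, and the single point that needs care is the dictionary ``adjacency matrix with all row sums $\leqslant 1$'' $\longleftrightarrow$ ``partial matching of $l$ vertices,'' which is what forces the multinomial weight down to $k!$ and the monomial down to $x_i^{\,l-2k}$. As an independent check (and an alternative self-contained proof) I would note that associativity of $\star_K$ together with the explicit formula of Definition 2.3 gives the recursion $P_l = x_i\,P_{l-1}+\hbar K_{ii}\,P_{l-1}'$ for $P_l:=\underbrace{x_i\star_K\cdots\star_K x_i}_{l}$, with $P_0=1$: when both arguments depend only on $x_i$ and one of them is linear, every term of $\exp\{\hbar\mathcal{K}\}$ beyond first order annihilates it. One then verifies by induction on $l$ that the claimed closed form satisfies this recursion.
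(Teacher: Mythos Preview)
Your argument is correct and is exactly the counting the paper has in mind: the paper's own proof is the single sentence ``It is enough for us to count the number of terms with form $\hbar^{k}K_{ii}^{k}x_{i}^{l-2k}$,'' and what you wrote is precisely that count carried out in detail (only partial matchings survive, each with multinomial weight $k!$, and $N(l,k)=l!/(2^{k}k!(l-2k)!)$). Your recursion check is a nice extra not in the paper.
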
 

\begin{proof}
It is enough for us to count the number of terms with form $\hbar^{k}\,K_{ii}^{k}\,x_{i}^{l-2k}$.
We consider the star product
$x_{1}\star_{K}\cdots\star_{K}x_{l}$. 
From the formula (4.2) we have

$$
x_{1}\star_{K}\cdots\star_{K}x_{l}
=\sum\limits_{k=0}^{\left[\frac{l}{2}\right]}
\frac{(\hbar\Lambda)^{k}}{k!}
(\sum\limits_{i<j}\partial_{i}\partial_{j})^{k}
(x_{1}\cdots x_{l}).
$$
It is obvious that

$$
(\sum\limits_{i<j}\partial_{i}\partial_{j})^{k}
(x_{1}\cdots x_{l})=
\sum\limits_{i_{1},j_{1},\cdots,i_{k},j_{k}}
\partial_{i_{1}}\partial_{j_{1}}\cdots
\partial_{i_{k}}\partial_{j_{k}}(x_{1}\cdots x_{l}),
$$
where $i_{\alpha}<j_{\alpha}$ ($\alpha=1,\cdots,k$),
$\{i_{\alpha},j_{\alpha}\}\cap\{i_{\beta},j_{\beta}\}
=\emptyset$ ($\alpha\not=\beta$).
The number of the terms in above sum should be

$$
\left(
\begin{array}{c}
l \\ 2
\end{array}
\right)
\left(
\begin{array}{c}
l-2 \\ 2
\end{array}
\right)\cdots
\left(
\begin{array}{c}
l-2k+2 \\ 2
\end{array}
\right)
=\frac{l!}{2^{k}(l-2k)!}.
$$
Finally, let $x_{1}=\cdots=x_{l}=x$,
we get the formula (4.4).

\end{proof}

\begin{definition}
The Wick power in the sense of the star product $\star_{K}$ of $x_{i}$
is defined to be 
\begin{equation}
    \begin{array}{cc}
      :x_{i}^{l}:_{K}= & \underbrace{x_{i}\star_{K}\cdots\star_{K} x_{i}} ,    \\
      & l-times  
\end{array}
\end{equation}
where $1\leqslant i \leqslant d$.
\end{definition}

\begin{remark}
By definition as above we know that the Wick power belongs to $C^{\infty}_{\mathcal{A}}$.
Duo to the formula (4.4) the Wick power is expressed by means of Hermite polynomials.
\end{remark}

Similar to the theorem 2.2, about the case of $f_{1}(x_{1})\star_{K}\cdots\star_{K} f_{d}(x_{d})$
we have more precise formula which is a generalisation of the classical Wick theorem.

\begin{theorem}(\textbf{Wick theorem})
For different propagator matrixes $K=(K_{ij})_{d\times d}$ and $K^{\prime}=(K^{\prime}_{ij})_{d\times d}$
we have

\begin{equation}
\begin{array}{c}
f_{1}(x_{1})\star_{K}\cdots\star_{K} f_{d}(x_{d})   \\
=\sum\limits_{M\in M_{adj}(d,\mathbb{N})}
\frac{\hbar^{degM}}{M!}(K-K^{\prime})_{M}
f_{1}^{(\alpha_{1})}\star_{K^{\prime}}\cdots\star_{K^{\prime}} f_{d}^{(\alpha_{d})}.   
\end{array}
\end{equation}
Where $(K-K^{\prime})_{M}=\sum_{i<j}(K_{ij}-K_{ij}^{\prime})^{m_{ij}}$ and

$$
\alpha_{i}=\sum\limits_{j}m_{ij},\,i=1,\cdots,d.
$$
\end{theorem}
\begin{proof}
Similar to the proof of theorem 2.2 we have
$$
    \exp\{\hbar\sum\limits_{i<j}K_{ij}\partial_{i}\partial_{j}\}
    = \exp\{\hbar\sum\limits_{i<j}(K_{ij}-K^{\prime}_{ij})\partial_{i}\partial_{j}\}
     \exp\{\hbar\sum\limits_{i<j}K^{\prime}_{ij}\partial_{i}\partial_{j}\}.$$
Therefore
$$
    f_{1}(x_{1})\star_{K}\cdots\star_{K} f_{d}(x_{d})
    = \exp\{\hbar\sum\limits_{i<j}(K_{ij}-K^{\prime}_{ij})\partial_{i}\partial_{j}\}
         f_{1}(x_{1})\star_{K^{\prime}}\cdots\star_{K^{\prime}} f_{d}(x_{d}).
$$
Above formula implies (4.6). 
\end{proof}

\begin{remark}
\begin{itemize}
$\\$
  \item If we take $K^{\prime}=0$ in the formula (4.6), we come bake to the formula
(4.2).
  \item If we take $K=0$, we have
$$
\begin{array}{c}
f_{1}(x_{1})\cdots f_{d}(x_{d})   \\
=\sum\limits_{M\in M_{adj}(d,\mathbb{N})}
\frac{(-\hbar)^{degM}}{M!}K^{\prime}_{M}
f_{1}^{(\alpha_{1})}\star_{K^{\prime}}\cdots\star_{K^{\prime}} f_{d}^{(\alpha_{d})}.        
\end{array}
$$
Specially, we can get inversion of the formula (4.4):
$$
x^{l}=\sum\limits_{k=0}^{[\frac{l}{2}]}\frac{l!}{2^{k}k!(l-2k)!}\,(-\hbar)^{k}\,\Lambda^{k}\,:x^{l-2k}:_{K}.
$$
\end{itemize}
\end{remark}

If we make a special choice of $f_{i}(x_{i})$ in the formula (4.6), i.e.
 $:x_{i}^{n_{i}}:_{K}\in C^{\infty}_{\mathcal{A}}$,
$n_{i}\in\mathbb{N},\,i=1,\cdots,d$, we can get the Wick theorem with expression very
similar to the classical Wick theorem in the standard quantum fields theory. 
\begin{corollary}
\begin{equation}
    \begin{array}{c}
       :x_{1}^{n_{1}}:_{K}\star_{K^{(1)}}\cdots\star_{K^{(1)}} :x_{d}^{n_{d}}:_{K}   \\
       =\sum\limits_{M\in M_{adj}(d,\mathbb{N})}\frac{\hbar^{degM}}{M!} 
(K^{(1)}-K^{(2)})_{M}       
\begin{pmatrix}
       n_{1}   \\
       \alpha_{1} 
\end{pmatrix}
\cdots
\begin{pmatrix}
      n_{d}    \\
      \alpha_{d}  
\end{pmatrix}\cdot \\
:x_{1}^{n_{1}-\alpha_{1}}:_{K}\star_{K^{(2)}}\cdots\star_{K^{(2)}}:x_{d}^{n_{d}-\alpha_{d}}:_{K}.
\end{array}
\end{equation}
Where 

$$
\alpha_{i}=\sum\limits_{j}m_{ij},\,\alpha_{i}\leqslant n_{i}\,i=1,\cdots,d.
$$
\end{corollary}

\begin{remark}
  Observing the formula (4.7), we find that the form of Feynman amplitudes
  dose not dependent on the choices of propagator matrixes $K^{(1)}$ and $K^{(2)}$.
  Thus, when we focus on the issues of the Feynman amplitudes,
  without loss of generality we can choose the the star product $\star_{K^{(1)}}$
  in Wick-monomial  
  $:x_{1}^{n_{1}}:_{K}\star_{K^{(1)}}\cdots\star_{K^{(1)}} :x_{d}^{n_{d}}:_{K}$
  to be commutative always.
\end{remark}

In the traditional sense the Feynman amplitudes arise from the expectation of 
Green functions. in the present paper Feynman amplitudes arise from the
bi-vector fields or coefficeints of the star product.
Actually, in perturbative algebraic quantum fields
theory the expectation of the Wick-monomial can be defined as the coefficient
of the term with the highest power of $\hbar$ in the star product. Now we define
the expectation of Wick-monomial $:x_{1}^{n_{1}}:_{K}\star_{K^{(1)}}\cdots\star_{K^{(1)}} :x_{d}^{n_{d}}:_{K}$,
denoted by

$$
<:x_{1}^{n_{1}}:_{K}\star_{K^{(1)}}\cdots\star_{K^{(1)}} :x_{d}^{n_{d}}:_{K}>,
$$
as following:

\begin{definition}
If we write the Wick-monomial as a polynomial of $\hbar$

$$
:x_{1}^{n_{1}}:_{K}\star_{K^{(1)}}\cdots\star_{K^{(1)}} :x_{d}^{n_{d}}:_{K}=\sum\limits_{k=1}^{m}c_{k}\hbar^{k},
$$
we define the expectation of above Wick-monomial as following:

\begin{itemize}
  \item When $n_{1}+\cdots+n_{d}=2m$,
  
  \begin{equation}
  <:x_{1}^{n_{1}}:_{K}\star_{K^{(1)}}\cdots\star_{K^{(1)}} :x_{d}^{n_{d}}:_{K}>=c_{m},
  \end{equation}
where

\begin{equation}
c_{m}=
\sum\limits_{M\in M_{adj}(d,\mathbb{N}),\,degM=m}
\frac{K_{M}^{(1)}}{M!}, 
\end{equation}
above sum is over all adjacency matrices $M=(m_{ij})_{d\times d}$, $degM=2m$, such that
$$n_{i}=\sum\limits_{j}m_{ij},\,i=1,\cdots,d.$$
  \item When $n_{1}+\cdots+n_{d}>2m$, 
  $$<:x_{1}^{n_{1}}:_{K}\star_{K^{(1)}}\cdots\star_{K^{(1)}} :x_{d}^{n_{d}}:_{K}>=0.$$
  \end{itemize}
\end{definition}

\begin{definition}
$\\$
When the integer sequence $(n_{1},\cdots,n_{d})$ satisfies the following conditions:
\begin{itemize}
  \item There is an adjacency matrix $M=(m_{ij})_{d\times d}$, $degM=2m$, such that
  \begin{equation}
    \begin{array}{c}
      2m=n_{1}+\cdots+n_{d},    \\
      n_{i}=\sum\limits_{j}m_{ij},\,i=1,\cdots,d,  
    \end{array}
 \end{equation}
\end{itemize}
we call this integer sequence $(n_{1},\cdots,n_{d})$ admissible.
\end{definition}

Combining the definition 4.2 and 4.3 we have the following conclusion immediately.
\begin{proposition}
The Wick-monomial $:x_{1}^{n_{1}}:_{K}\star_{K^{(1)}}\cdots\star_{K^{(1)}} :x_{d}^{n_{d}}:_{K}$
endows non-zero expectation iff $(n_{1},\cdots,n_{d})$ is admissible.
\end{proposition}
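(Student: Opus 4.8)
The plan is to unwind the two definitions (Definition 4.2 and Definition 4.3) and observe that they quantify over the same set of adjacency matrices, so the two conditions are logically equivalent. First I would recall from Definition 4.2 that the expectation $<:x_{1}^{n_{1}}:_{K}\star_{K^{(1)}}\cdots\star_{K^{(1)}} :x_{d}^{n_{d}}:_{K}>$ is non-zero exactly in the branch $n_{1}+\cdots+n_{d}=2m$, and there it equals $c_{m}$ as given by formula (4.9); in the branch $n_{1}+\cdots+n_{d}>2m$ it is declared to be $0$. (Here $m$ is the highest power of $\hbar$ occurring in the Wick-monomial written as $\sum_{k=1}^{m}c_{k}\hbar^{k}$.) So the expectation is non-zero if and only if $n_{1}+\cdots+n_{d}=2m$ \emph{and} $c_{m}\neq 0$.

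Next I would analyze $c_{m}$ via Corollary 4.2. Expanding $:x_{1}^{n_{1}}:_{K}\star_{K^{(1)}}\cdots\star_{K^{(1)}} :x_{d}^{n_{d}}:_{K}$ using formula (4.7) with $K^{(2)}=0$ (which is legitimate by Remark 4.8, or just by taking the relevant specialization), the coefficient of the top power $\hbar^{m}$ collects precisely those terms whose residual Wick factor $:x_{1}^{n_{1}-\alpha_{1}}:_{K}\star\cdots\star:x_{d}^{n_{d}-\alpha_{d}}:_{K}$ is a constant, i.e. $\alpha_{i}=n_{i}$ for every $i$. Since $\alpha_{i}=\sum_{j}m_{ij}$ and $\deg M=2m$, this forces $2m=\sum_{i}\alpha_{i}=\sum_{i}n_{i}$ and $n_{i}=\sum_{j}m_{ij}$ — exactly the admissibility condition (4.10) of Definition 4.3. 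Thus the sum defining $c_{m}$ in (4.9) ranges over the set of adjacency matrices $M$ realizing $(n_{1},\dots,n_{d})$ as row-sums with $\deg M=2m=\sum n_i$, and this set is non-empty if and only if $(n_{1},\dots,n_{d})$ is admissible.

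Then I would argue that whenever this index set is non-empty, $c_{m}\neq 0$: each term $\binom{m}{m_{12},\dots,m_{d-1,d}}\prod_{i<j}(K^{(1)}_{ij})^{m_{ij}}$ carries a strictly positive integer multinomial coefficient, and distinct adjacency matrices $M$ give distinct monomials in the abstract generators $K^{(1)}_{ij}$ of $\mathcal{A}$ (these being algebraically independent, or at least treated as formal symbols), so no cancellation can occur and the sum is a non-zero element of $\mathcal{A}$. Combining: the expectation is non-zero $\iff$ $\sum_i n_i = 2m$ and the index set of (4.9) is non-empty $\iff$ $(n_1,\dots,n_d)$ is admissible. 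Conversely, if $(n_1,\dots,n_d)$ is admissible then the witnessing $M$ forces $\sum_i n_i = 2m$ (since $m$ is the top $\hbar$-degree and $\alpha_i = n_i$ is attainable) and $c_m\neq 0$, so the expectation is non-zero.

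The main obstacle I anticipate is the bookkeeping that identifies $m$ — the literal top power of $\hbar$ in the polynomial expansion — with $\tfrac12\sum_i n_i$ in the admissible case, and showing that in the non-admissible case the top power is strictly smaller so that the $n_1+\cdots+n_d > 2m$ branch of Definition 4.2 applies and returns $0$. This needs a short argument that the maximal $\deg M$ for which the index set of (4.9)-type sums is non-empty equals $\sum_i n_i$ when that quantity is even and admissible, and is strictly less otherwise; this follows by tracking which terms of (4.7) survive to top $\hbar$-order, i.e. which have vanishing residual Wick monomial. Everything else is a direct comparison of the quantifiers in the two definitions, so the proof should be short once this degree-matching point is pinned down.
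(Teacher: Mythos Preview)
Your approach is correct and matches the paper's, which offers no argument beyond the sentence ``Combining the definition 4.2 and 4.3 we have the following conclusion immediately.'' Your proposal simply makes explicit what the paper leaves implicit---the identification of the top $\hbar$-degree with $\tfrac{1}{2}\sum_i n_i$ and the non-cancellation of terms in $c_m$ (the latter tacitly assuming the $K^{(1)}_{ij}$ are algebraically independent in $\mathcal{A}$, which the paper does not state but clearly intends).
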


We have a simpler description of the admissible integer sequence.
Here we assume the star product is commutative and $n_{i}>0,\,i=1,\cdots,d$.
\begin{theorem}
A integer sequence $(n_{1},\cdots,n_{d})$ is admissible iff $n_{1}+\cdots+n_{d}=2m$ 
and $max_{1\leqslant i\leqslant d}\{n_{i}\}\leqslant m$,
where $m$ is a positive integer.
\end{theorem}
\begin{proof}
If the integer sequence
$(n_{1},\cdots,n_{d})$ is admissible, i.e. there is an adjacency matrix $M=(m_{ij})_{d\times d}$,
such that $n_{i}=\sum\limits_{j}m_{ij},\,i=1,\cdots,d$. Then
we have
$$2n_{i}=\sum\limits_{j}m_{ij}+\sum\limits_{j}m_{ji}\leqslant 
\sum\limits_{i,j}m_{ij}=2\sum\limits_{i<j}m_{ij}=\sum\limits_{i}n_{i}.$$

Conversely we need to prove that for an integer sequence $(n_{1},\cdots,n_{d})$
satisfying $n_{1}+\cdots+n_{d}=2m,\,m\in \mathbb{N}$ and $2n_{i}\leqslant n_{1}+\cdots+n_{d},\,i=1,\cdots,d$
there is an adjacency matrix $M=(m_{ij})_{d\times d}$
such that $n_{i}=\sum\limits_{j}m_{ij},\,i=1,\cdots,d$.
Now we prove  the existence of adjacency matrix by induction for $d$.
Without loss of generality we assume $n_{1}\geqslant\cdots\geqslant n_{d}$.

When $d=2$, then $n_{1}=n_{2}$ at this time. In this case there is an unique suitable adjacency
matrix
$$M=
\begin{pmatrix}
     0 & n_{1}   \\
     n_{2} &  0
\end{pmatrix}$$
satisfying the conditions what we need.

Suppose the conclusion is valid for $d$, now we consider the case of $d+1$.
The case will be divided into a few parts.

\textbf{Case of $n_{1}=\cdots=n_{d+1}$:}
\begin{itemize}
  \item $d+1$ is an even integer: Let $n_{1}=\cdots=n_{d+1}=p$,
  we can take the entries of $M=(m_{ij})_{(d+1)\times(d+1)}$ to be
  $$m_{ij}=p,i+j=d+2;m_{ij}=0,\,for\,others.$$
  \item $d+1$ is an odd integer: Because $(d+1)p=2m$, p is an even integer,
  let $p=2q$, the entries of $M=(m_{ij})_{(d+1)\times(d+1)}$ can be taken
  to be 
  $$m_{i,i+1}=m_{i+1,i}=q,i=1,\cdots,d,m_{1,d+1}=m_{d+1,1}=q;m_{ij}=0,\,for\,others.$$
\end{itemize}

\textbf{Case of $n_{1}>n_{d+1}$:}
$\\$

Let $n^{\prime}_{1}=n_{1}-n_{d+1}$, then we know that 
$$n^{\prime}_{1}+n_{2}+\cdots+n_{d}=\sum\limits_{i=1}^{d+1}n_{i}-2n_{d+1}$$
is an even integer and $(n^{\prime}_{1},n_{2},\cdots,n_{d})$
satisfies the condition (4.11).
According to the hypothesis of induction we know that there is an adjacency matrix 
$M=(m_{ij})_{d\times d}$ such that
$$n^{\prime}_{1}=\sum\limits_{j}m_{1j},n_{i}=\sum\limits_{j}m_{ij},i=2,\cdots,d.$$

Now we take a $(d+1)\times(d+1)$ adjacency matrix as following:
$$
M_{1}=
\begin{pmatrix}
      M &
\begin{array}{c}
      n_{d+1}    \\
        0          \\
        \vdots
\end{array}    \\
       n_{d+1}\,\,0\cdots   & 0 
\end{pmatrix}.
$$
The matrix $M_{1}$ satisfies all conditions what we need.
\end{proof}

We want to talk about theorem 4.2 more from the combinatorial viewpoint.
Under the the assumption being $n_{i}>0,\,i=1,\cdots,d$ the adjacency
matrixes need additional restriction which is that there is at least one
positive entry at every row or column. Recalling RSK algorithm, there are 
one-one correspondences between the following objects:
(see \cite{3}, \cite{4}, \cite{14})
\begin{itemize}
  \item the set of permutations which are involutions without fixed points,
  \item the set of adjacency matrixes with zeros along the main diagonal,
  \item the set of semi-standard Young tableaus(SSYT) without odd columns.
\end{itemize}

From the combinatorial viewpoint the integer sequence $(n_{1},\cdots,n_{d})$
arising from the monomial $:x_{1}^{n_{1}}:_{K}\star_{K^{(1)}}\cdots\star_{K^{(1)}} :x_{d}^{n_{d}}:_{K} $
plays the role of content for some semi-standard Young tableau denoted 
by $1^{n_{1}}2^{n_{2}}\cdots d^{n_{d}}$ usually. Here we assume 
$n_{1}\geqslant\cdots\geqslant n_{d}$. Starting from an admissible integer sequence
$(n_{1},\cdots,n_{d})$, now we begin to construct
a special SSYT such that under the one-one correspondence mentioned above
this Young tableau results in an adjacency matrix which satisfies
the conditions in definition 4.2. Firstly we construct an Young diagram
with 2 rows and there are $m$ columns in each row, where
$m=\frac{1}{2}\sum_{i}n_{i}$. Secondly we fill the numbers in above Young
diagram as follows. At beginning we put the numbers in the first row.
Starting from the up-left corner of Young diagram we put number
"1" with $n_{1}$ times, and then we put all of "2" and so on until the first
row is full. Continuously we put numbers in the second row in the same way.
Consequently we get a SSYT as table 3, where 
$2\leqslant r_{1}\leqslant r_{2}\leqslant r_{3}\leqslant r_{4}\leqslant d$.
Above discussion gives a combinatorial proof of theorem 4.2.
\begin{table}
\begin{center}
\begin{tabular}{|c|c|c|c|c|c|}
\hline 1 & $\cdots$ & 1 & 2 & $\cdots$ & $r_{1}$ \\
\hline $r_{2}$ & $\cdots$ & $r_{3}$ & $r_{4}$ &$ \cdots$ & d \\
\hline 
\end{tabular}
 \caption{default}
\end{center}
\label{defaulttable}
\end{table}

\section{The star products at levels of fields and functionals}

In this section we construct the star products at levels of fields and functionals
base on the star product of functions discussed in previous sections.
We will discuss the problems on the general smooth manifold which contains
Lorentzian manifold as a special case.

\paragraph{The star product of the scalar fields}

Let $X$ be a $n$-dimensional real smooth manifold, $K(\textbf{x},\textbf{y})\in \mathcal{D}^{\prime}(X\times X)$. 
For simplicity we assume $K(\textbf{x},\textbf{y})\in C^{\infty}(X\times X)$ which can be considered as 
regulation of general distribution on $X\times X$. In this section the bold letters $\textbf{x}$ or $\textbf{y}$
will denote the points in $X$.
At beginning of this subsection we discuss the star product of fields.
Here we discuss the star product of functions being of form

$$
\begin{array}{cc}
      f(\varphi(\textbf{x}_{1}),\cdots,\varphi(\textbf{x}_{d}))\in &C^{\infty}(\underbrace{X\times\cdots\times X}),    \\
      & d-times  
\end{array} 
$$
where $f(y_{1},\cdots,y_{d})\in C^{\infty}(\mathbb{R}^{d})$, the function $\varphi(\textbf{x})\in C^{\infty}(X)$
plays the role of real scalar field on $X$.

\begin{definition}
Let $K(\textbf{x},\textbf{y})\in C^{\infty}(X\times X)$, $f(y_{1},\cdots,y_{d})$,
$g(y_{1},\cdots,y_{d})\in C^{\infty}(\mathbb{R}^{d})$, $\varphi(\textbf{x})\in C^{\infty}(X)$, we define
the star product as following:

\begin{equation}
\begin{array}{c}
       f(\varphi(\textbf{x}_{1}),\cdots,\varphi(\textbf{x}_{d}))\star_{K}
        g(\varphi(\textbf{y}_{1}),\cdots,\varphi(\textbf{y}_{d}))  \\
        =\textbf{m}\circ \left[\exp\{\hbar\mathcal{K}\}
        f(x_{1},\cdots,x_{d})\otimes g(y_{1},\cdots,y_{d}) |_{x_{i}=\varphi(\textbf{x}_{i}),y_{i}=\varphi(\textbf{y}_{i})}\right].
\end{array}
\end{equation}
Where $\mathcal{K}=\sum_{i,j}K_{ij}\partial_{x_{i}}\otimes\partial_{y_{j}}$, $K_{ij}=K(\textbf{x}_{i},\textbf{y}_{j})$.  
\end{definition}

When $K(\textbf{x},\textbf{y})=P^{\ast}K(\textbf{x},\textbf{y})$, where 
$P:X\times X\longrightarrow X\times X; P(\textbf{x},\textbf{y})=(\textbf{y},\textbf{x})$, is permutation
map, we know that the star product $\star_{K}$ is commutative. 
For non-commutative case we have:

\begin{definition}
The Poisson bracket is defined to be

\begin{equation}
\begin{array}{c}
       \{f(\varphi(\textbf{x}_{1}),\cdots,\varphi(\textbf{x}_{d})),g(\varphi(\textbf{y}_{1}),\cdots,\varphi(\textbf{y}_{d}))\}_{K}   \\
      =\textbf{m}\circ\left[\sum_{i,j}(K_{ij}-K_{ji})\partial_{i}f(x_{1},\cdots,x_{d})\otimes\partial_{j}g(y_{1},\cdots,y_{d})
      | _{x_{i}=\varphi(\textbf{x}_{i}),y_{i}=\varphi(\textbf{y}_{i})}\right] .
\end{array}
\end{equation}
\end{definition}

\begin{remark}
$\\$
\begin{itemize}
  \item The star product defined in definition 5.1 and Poisson bracket defined in definition 5.2 
  are well defined and rely on the issues at level functions. Actually as a special case we have
  
\begin{equation}
\begin{array}{c}
        f(\varphi(\textbf{x}_{1}),\cdots,\varphi(\textbf{x}_{d}))\star_{K}
        g(\varphi(\textbf{x}_{1}),\cdots,\varphi(\textbf{x}_{d}))    \\
        = f(\varphi(\textbf{x}_{1}),\cdots,\varphi(\textbf{x}_{d}))\star_{K}
        g(\varphi(\textbf{y}_{1}),\cdots,\varphi(\textbf{y}_{d})) 
        |_{\textbf{x}_{i}=\textbf{y}_{i}} \\
        =f(x_{1},\cdots, x_{d})\star_{K} g(y_{1},\cdots,y_{d})
        |_{x_{i}=y_{i}=\varphi(\textbf{x}_{i})}.      
        \end{array}
\end{equation}
and

\begin{equation}
\begin{array}{c}
       \{f(\varphi(\textbf{x}_{1}),\cdots,\varphi(\textbf{x}_{d})),g(\varphi(\textbf{x}_{1}),\cdots,\varphi(\textbf{x}_{d}))\}_{K}    \\
       =\{f(y_{1},\cdots,y_{d}),g(y_{1},\cdots,y_{d}) \} |_{y_{i}=\varphi(\textbf{x}_{i})} .
\end{array}
\end{equation}  
Where $K_{ij}=K(\textbf{x}_{i},\textbf{x}_{j})$ .
  \item For all of conclusions in section 2 there are parallel ones in the case of scalar fields. 
\end{itemize}
\end{remark}

In the case of the scalar fields in the 
sense of our setting we have 
Wick theorem and Wick power similar to
the situation of the standard quantum fields theory. 
Recalling the discussion in section 4, 
if we take $x_{i}=\varphi(\textbf{x}_{i}),\,i=1,\cdots,d$, in each formula in section 4,
we can get a corresponding formula in the case of fields.

\begin{theorem}
Let $K(\textbf{x},\textbf{y}),\,K^{\prime}(\textbf{x},\textbf{y})$ be smooth functions
on $X\times X$, $f_{1}(\cdot),\cdots,f_{d}(\cdot)\in C^{\infty}(\mathbb{R})$, $\varphi(\textbf{x})\in C^{\infty}(X)$,
we have
\begin{equation}
\begin{array}{c}
        f_{1}(\varphi(\textbf{x}_{1}))\star_{K}\cdots\star_{K}f_{d}(\varphi(\textbf{x}_{d})) \\
        =\sum\limits_{M\in M_{adj}(d,\mathbb{N})}\frac{\hbar^{degM}}{M!}
     (K-K^{\prime})_{M}\,  
    f_{1}^{(\alpha_{1})}(\varphi(\textbf{x}_{1}))\star_{K^{\prime}}\cdots\star_{K^{\prime}} f_{d}^{(\alpha_{d})}(\varphi(\textbf{x}_{d})), 
\end{array}
\end{equation}
where

$$
\begin{array}{c}
      \alpha_{i}=\sum\limits_{j}m_{ij},\,i=1,\cdots,d,   \\
      K_{ij}=K(\textbf{x}_{i},\textbf{x}_{j}),\,  K^{\prime}_{ij}=K^{\prime}(\textbf{x}_{i},\textbf{x}_{j}) .
\end{array}
$$
\end{theorem}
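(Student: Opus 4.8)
The plan is to reduce Theorem 5.1 to its finite-dimensional counterpart, Theorem 4.1, by the substitution principle already announced in Remark 5.1. Concretely, I would start from the definition of the multiple star product of fields. By iterating Definition 5.1 (equivalently, by the field-level analogue of Proposition 2.2), the left-hand side
$f_{1}(\varphi(\textbf{x}_{1}))\star_{K}\cdots\star_{K}f_{d}(\varphi(\textbf{x}_{d}))$
equals $\textbf{m}\circ\bigl[\exp\{\hbar\sum_{i<j}K_{ij}\partial_{x_i}\otimes\partial_{x_j}\}\,f_1(x_1)\otimes\cdots\otimes f_d(x_d)\bigr]\big|_{x_i=\varphi(\textbf{x}_i)}$, where now $K_{ij}=K(\textbf{x}_i,\textbf{x}_j)$ are genuine smooth functions on $X^{\times d}$ rather than abstract elements of $\mathcal{A}$. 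The only structural difference from Section 4 is that the propagator "matrix" entries live in $C^{\infty}(X^{\times d})$, but since these are treated as commuting constants with respect to all the $\partial_{x_i}$, every algebraic manipulation in Section 4 goes through verbatim.

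The key step is then to invoke Theorem 4.1 with the specialization $K_{ij}\rightsquigarrow K(\textbf{x}_i,\textbf{x}_j)$ and $K'_{ij}\rightsquigarrow K'(\textbf{x}_i,\textbf{x}_j)$. Theorem 4.1 gives the identity at the level of functions on $\mathbb{R}^d$,
$$
f_{1}(x_{1})\star_{K}\cdots\star_{K} f_{d}(x_{d})
=\sum_{k\geqslant 0}\frac{\hbar^{k}}{k!}\sum_{degM=2k}
\binom{k}{m_{12},\cdots,m_{d-1,d}}
\Bigl(f_{1}^{(\alpha_{1})}\star_{K'}\cdots\star_{K'} f_{d}^{(\alpha_{d})}\Bigr)
\prod_{i<j}(K_{ij}-K'_{ij})^{m_{ij}},
$$
with $\alpha_i=\sum_j m_{ij}$. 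I would then apply the evaluation homomorphism $x_i\mapsto\varphi(\textbf{x}_i)$ followed by $\textbf{m}\circ(-)$ to both sides. On the left this reproduces, by Definition 5.1 (and (5.1)--(5.2) in Remark 5.1), the field-level star product $f_{1}(\varphi(\textbf{x}_{1}))\star_{K}\cdots\star_{K}f_{d}(\varphi(\textbf{x}_{d}))$; on the right, the factors $\prod_{i<j}(K_{ij}-K'_{ij})^{m_{ij}}$ become $\prod_{i<j}(K(\textbf{x}_i,\textbf{x}_j)-K'(\textbf{x}_i,\textbf{x}_j))^{m_{ij}}$, and each summand $f_{1}^{(\alpha_{1})}\star_{K'}\cdots\star_{K'} f_{d}^{(\alpha_{d})}$ evaluates to $f_{1}^{(\alpha_{1})}(\varphi(\textbf{x}_{1}))\star_{K'}\cdots\star_{K'} f_{d}^{(\alpha_{d})}(\varphi(\textbf{x}_{d}))$, exactly matching the right-hand side of (5.3).

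The main thing to check — and where a little care is genuinely needed — is the compatibility of the evaluation/multiplication map $x_i\mapsto\varphi(\textbf{x}_i)$, $\textbf{m}\circ(-)$ with the $\star_{K'}$ products appearing on the right-hand side: one must verify that evaluating a $\star_{K'}$-product of functions and then substituting the field agrees with first substituting and then forming the field-level $\star_{K'}$-product. This is precisely the content of (5.2) in Remark 5.1 (that the field-level construction is "well defined and rel[ies] on the issues at level of functions"), applied with propagator $K'$; since $\varphi$ is smooth and the $K'_{ij}=K'(\textbf{x}_i,\textbf{x}_j)$ are smooth, all the formal power series in $\hbar$ have coefficients in $C^{\infty}(X^{\times d})$ and the substitution commutes term by term with the differential operators. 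Once this compatibility is recorded, the theorem follows immediately, and I would remark that taking $K'=0$ recovers the field-level version of Proposition 4.1 while $K=0$ gives the field-level inversion formula, exactly parallel to Remark 4.4.
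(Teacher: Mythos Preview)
Your proposal is correct and follows essentially the same approach as the paper: the paper does not give a separate proof of Theorem~5.1 but simply states, immediately before it, that ``if we take $x_{i}=\varphi(\textbf{x}_{i}),\,i=1,\cdots,d$, in each formula in section~4, we can get a corresponding formula in the case of fields.'' Your write-up is in fact more careful than the paper's, since you explicitly flag the compatibility of the substitution $x_i\mapsto\varphi(\textbf{x}_i)$ with the $\star_{K'}$-products on the right-hand side, which the paper leaves implicit in Remark~5.1.
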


Moreover we define the Wick power in the case of fields as following:
\begin{equation}
    \begin{array}{cc}
    :\varphi^{l}(\textbf{x}):_{K}= & \underbrace{\varphi(\textbf{x})\star_{K}\cdots\star_{K} \varphi(\textbf{x})} .\\
          &   l-times
    \end{array}
\end{equation}

Precisely we have an expression of Wick power in terms of Hermite polynomials

\begin{equation}
    :\varphi^{l}(\textbf{x}_{i}):_{K}=\sum\limits_{k=0} ^{[\frac{l}{2}]}\frac{l!}{2^{k}(l-2k)!k!}
    \,\hbar^{k}\,K_{ii}^{k}\,(\varphi(\textbf{x}_{i}))^{l-2k},\,1\leqslant i\leqslant d,
\end{equation}
where $K_{ii}=K(\textbf{x}_{i},\textbf{x}_{i}),\,1\leqslant i\leqslant d$.

The following Wick theorem expressed by means of Wick power is more closed to
classical Wick theorem.

\begin{corollary}
Let $K(\textbf{x},\textbf{y}),K^{(1)}(\textbf{x},\textbf{y}),K^{(2)}(\textbf{x},\textbf{y})
\in C^{\infty}(X\times X)$, then

\begin{equation}
\begin{array}{c}
:\varphi^{n_{1}}(\textbf{x}_{1}):_{K}\star_{K^{(1)}}
\cdots\star_{K^{(1)}}:\varphi^{^{n_{d}}}(\textbf{x}_{d}):_{K}    \\
=\sum\limits_{M\in M_{adj}(d,\mathbb{N})}
\frac{\hbar^{degM}}{M!}(K^{(1)}-K^{(2)})_{M}
\begin{pmatrix}
      n_{1}    \\
      \alpha_{1}  
\end{pmatrix}\cdots
\begin{pmatrix}
      n_{d}    \\
      \alpha_{d}  
\end{pmatrix}\cdot \\
:\varphi^{n_{1}-\alpha_{1}}(\textbf{x}_{1}):_{K}
\star_{K^{(2)}}\cdots\star_{K^{(2)}}
:\varphi^{^{n_{d}-\alpha_{d}}}(\textbf{x}_{d}):_{K},
\end{array}
\end{equation}
where

$$
\begin{array}{c}
      \alpha_{i}=\sum\limits_{j}m_{ij},\,\alpha_{i}\leqslant n_{i},\,\,i=1,\cdots,d,   \\
      K^{(1)}_{ij}=K^{(1)}(\textbf{x}_{i},\textbf{x}_{j}),\,  K^{(2)}_{ij}=K^{(2)}(\textbf{x}_{i},\textbf{x}_{j}) .
\end{array}
$$
\end{corollary}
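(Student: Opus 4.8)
The plan is to obtain this corollary as the image, under the substitution $x_i\mapsto\varphi(\mathbf{x}_i)$, of the finite-dimensional Wick theorem already established as Corollary 4.1, i.e. of formula (4.7). Throughout, the three propagator matrices entering (4.7) are taken to be the evaluated matrices with entries $K_{ij}=K(\mathbf{x}_i,\mathbf{x}_j)$, $K^{(1)}_{ij}=K^{(1)}(\mathbf{x}_i,\mathbf{x}_j)$ and $K^{(2)}_{ij}=K^{(2)}(\mathbf{x}_i,\mathbf{x}_j)$; these are legitimate propagator matrices once the smooth functions of the points $\mathbf{x}_1,\dots,\mathbf{x}_d$ are regarded as the abstract coefficients, which is precisely the setting of Section 5 (cf. Definition 5.1).

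First I would record the compatibility of the substitution with the star product. By Definition 5.1 and the special case displayed in the Remark following it, for density functions $f,g\in C^{\infty}(\mathbb{R}^d)$ one has
\[
f(\varphi(\mathbf{x}_1),\dots,\varphi(\mathbf{x}_d))\star_{K}g(\varphi(\mathbf{x}_1),\dots,\varphi(\mathbf{x}_d))
=\big[f(x_1,\dots,x_d)\star_{K}g(x_1,\dots,x_d)\big]\big|_{x_i=\varphi(\mathbf{x}_i)}.
\]
Using the associativity proved in Section 2 (formula (2.8)) together with the fact that the evaluation map $x_i\mapsto\varphi(\mathbf{x}_i)$ commutes with the partial derivatives $\partial_{x_i}$ — which act on the arguments of the density functions \emph{before} the substitution is performed — and with the pointwise multiplication $\mathbf{m}$, this extends to the multiple star product: the evaluation map intertwines the finite-dimensional $m$-fold star product $f_1(x_1)\star_K\cdots\star_K f_m(x_m)$ with its field-level counterpart, for any number of factors and for each of the propagators above.

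Next I would identify the building blocks on both sides of (5.8). The Wick power $:\varphi^{n_i}(\mathbf{x}_i):_K$ of (5.6) is by definition the $n_i$-fold $\star_K$-product of $\varphi(\mathbf{x}_i)$ with itself, so by the compatibility just noted it equals $:x_i^{n_i}:_K$ (Definition 4.1) evaluated at $x_i=\varphi(\mathbf{x}_i)$, $K_{ii}=K(\mathbf{x}_i,\mathbf{x}_i)$; substituting Corollary's formula (4.4) into this gives at once the Hermite-polynomial expression (5.7). In the same way the left-hand side of (5.8) is the evaluation at $x_i=\varphi(\mathbf{x}_i)$ of the left-hand side of (4.7). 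I would then apply the evaluation map to the whole identity (4.7), summand by summand: each term is the product of a numerical multinomial/binomial factor, a Wick-monomial $:x_1^{n_1-\alpha_1}:_K\star_{K^{(2)}}\cdots\star_{K^{(2)}}:x_d^{n_d-\alpha_d}:_K$, and a coefficient $\prod_{i<j}(K^{(1)}_{ij}-K^{(2)}_{ij})^{m_{ij}}$. The Wick-monomial maps to $:\varphi^{n_1-\alpha_1}(\mathbf{x}_1):_K\star_{K^{(2)}}\cdots\star_{K^{(2)}}:\varphi^{n_d-\alpha_d}(\mathbf{x}_d):_K$, the coefficient maps to $\prod_{i<j}(K^{(1)}(\mathbf{x}_i,\mathbf{x}_j)-K^{(2)}(\mathbf{x}_i,\mathbf{x}_j))^{m_{ij}}$, and the numerical factors are unchanged; summing over $k$ and over all $d\times d$ adjacency matrices $M$ with $\deg M=2k$ and $\alpha_i=\sum_j m_{ij}\le n_i$ reproduces (5.8) verbatim.

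The only step requiring genuine verification is the one in the second paragraph: that ``evaluate, then star-multiply'' agrees with ``star-multiply, then evaluate'' for an arbitrary number of factors and for each of $K^{(1)}$ and $K^{(2)}$. This rests on associativity (formula (2.8)), which reduces matters to the two-factor case, and on the elementary observation that the bi-differential operator $\mathcal{K}_{ij}=\sum_{\mu\nu}K_{\mu\nu}\partial_{\mu,x_i}\otimes\partial_{\nu,x_j}$ acts on the arguments of the density functions before substitution, so that carrying out $x_i=\varphi(\mathbf{x}_i)$ afterwards and then applying $\mathbf{m}$ reproduces the field-level definition (5.1). Everything else is the multi-index bookkeeping already done in Section 4, so no new combinatorial input is needed.
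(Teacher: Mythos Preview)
Your proposal is correct and follows exactly the approach the paper indicates: the paper does not give a separate proof of this corollary but states just before Theorem~5.1 that ``if we take $x_{i}=\varphi(\textbf{x}_{i}),\,i=1,\cdots,d$, in each formula in section~4, we can get a corresponding formula in the case of fields,'' so Corollary~5.1 is simply the substitution of Corollary~4.1 (formula~(4.7)) into the field setting. Your write-up supplies more detail than the paper on why the evaluation map intertwines the star products (via Definition~5.1, formula~(5.3), and associativity), but the strategy is identical.
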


Now we define the expectation of monomial 
$:\varphi^{n_{1}}(\textbf{x}_{1}):_{K}\star_{K^{\prime}}\cdots\star_{K^{\prime}}:\varphi^{^{n_{d}}}(\textbf{x}_{d}):_{K}$,
where $K(\textbf{x},\textbf{y}),K^{\prime}(\textbf{x},\textbf{y})$ are smooth functions on $X\times X$.
For convenience we assume the star product $\star_{K^{\prime}}$ is commutative.
\begin{definition}
$\\$
\begin{itemize}
  \item When the integer sequence $(n_{1},\cdots,n_{d})$ is admissible, we define
  
  \begin{equation}
\begin{array}{c}
       <:\varphi^{n_{1}}(\textbf{x}_{1}):_{K}\star_{K^{\prime}}\cdots\star_{K^{\prime}}:\varphi^{^{n_{d}}}(\textbf{x}_{d}):_{K}>   \\
      =\sum\limits_{M\in M_{adj}(d,\mathbb{N}),degM=m} 
\frac{K^{\prime}_{M}}{M!},
\end{array}
\end{equation}
where $2m=\sum_{i}n_{i}$,  the sum in (5.9) is over all adjacency matrixes $M=(m_{ij})_{d\times d}$
with $degM=2m$ satisfying
$$n_{i}=\sum_{j}m_{ij},\,i=1,\cdots,d.$$
  \item for others
  $$<:\varphi^{n_{1}}(\textbf{x}_{1}):_{K}\star_{K^{\prime}}\cdots\star_{K^{\prime}}:\varphi^{^{n_{d}}}(\textbf{x}_{d}):_{K}>=0.$$
\end{itemize}
\end{definition}

\begin{remark}
For general distribution $K(\textbf{x},\textbf{y})\in \mathcal{D}^{\prime}(X\times X)$,
the power and restriction on diagonal of $X\times X$ 
make non-sense generally. In this case only the star product with form
$\varphi(\textbf{x}_{1})\star_{K}\cdots\star_{K}\varphi(\textbf{x}_{d})$ may be well defined,
but some analytic conditions, for example, concerning wave front set $WF(K)$, may be
needed.
\end{remark}

\paragraph{The star product of the functionals}

Now we turn to situation of the functionals. We consider the functionals with form
\begin{equation}
    F(\varphi)= \int_{X^{d}}
      f(\textbf{x}_{1},\cdots,\textbf{x}_{d},\varphi(\textbf{x}_{1}),\cdots,\varphi(\textbf{x}_{d}))
      dV_{d},    
\end{equation}
where $f\in C^{\infty}(X^{d}\times \mathbb{R}^{d})$, 
$$
\begin{array}{cc}
      X^{d}= &\underbrace{X\times\cdots\times X} ,   \\
      & d-times   
\end{array}
$$
and $dV_{d}$ is volume form on $X^{d}$.

In the below discussion we assume the integrals make sense always. 
We state the definitions of star product and Poisson bracket of functionals
as following.
\begin{definition}
Let $F(\varphi),G(\varphi)$ be functionals as in (5.10).
\begin{itemize}
  \item We define their star product to be
\begin{equation}
        F(\varphi)\star_{K}G(\varphi) 
      =\int_{X^{d}} \int_{X^{d}} f(\cdot)\star_{K}g(\cdot)dV_{d} dV_{d},
\end{equation}
where
$$
\begin{array}{c}
        f(\textbf{x}_{1},\cdots,\textbf{x}_{d},\varphi(\textbf{x}_{1}),\cdots,\varphi(\textbf{x}_{d}))
        \star_{K}g(\textbf{y}_{1},\cdots,\textbf{y}_{d},\varphi(\textbf{y}_{1}),\cdots,\varphi(\textbf{y}_{d}))   \\
        =\textbf{m} \circ\left[\exp\{\hbar\mathcal{K}\}
       f(\textbf{x}_{1},\cdots,x_{1},\cdots,x_{d})
       g(\textbf{y}_{1},\cdots,y_{1},\cdots,y_{d})|_{x_{i}=\varphi(\textbf{x}_{i}),y_{i}=\varphi(\textbf{y}_{i})} \right].
\end{array}
$$
where $\mathcal{K}=\sum\limits_{i,j}K_{ij}\partial_{x_{i}}\otimes\partial_{y_{j}}$,
$K_{ij}=K(\textbf{x}_{i},\textbf{y}_{j})$.
   \item For the star product between the functionals and fields we have
\begin{equation}
      F(\varphi)\star_{K}g(\textbf{y}_{1},\cdots,\textbf{y}_{d},\varphi(\textbf{y}_{1}),\cdots,\varphi(\textbf{y}_{d}))
      =\int_{X^{d}}f(\cdot)\star_{K}g(\cdot)dV_{d},
\end{equation}
in (5.12) the integral concerns the variables of $f(\cdot)$.
    \item We define the Poisson bracket of $F(\varphi)$ and $G(\varphi)$ to be
\begin{equation}
    \{F(\varphi),G(\varphi)\}_{K}=\int_{X^{d}} \int_{X^{d}}\{ f(\cdot),g(\cdot)\}_{K}dV_{d} dV_{d},
\end{equation} 
where 
$$
\begin{array}{c}
      \{f(\textbf{x}_{1},\cdots,\textbf{x}_{d},\varphi(\textbf{x}_{1}),\cdots,\varphi(\textbf{x}_{d})),
      g(\textbf{y}_{1},\cdots,\textbf{y}_{d},\varphi(\textbf{y}_{1}),\cdots,\varphi(\textbf{y}_{d}))\}_{K}    \\
      =\sum\limits_{i,j}(K_{ij}-K_{ji})\partial_{x_{i}}f(\textbf{x}_{1},\cdots,x_{1},\cdots,x_{d})
      \partial_{y_{j}}g(\textbf{y}_{1},\cdots,y_{1},\cdots,y_{d})|_{x_{i}=\varphi(\textbf{x}_{i}),y_{i}=\varphi(\textbf{y}_{i})} . 
\end{array}
$$
\end{itemize}
\end{definition}

The star product and Poisson bracket defined in definition 5.3 are well
defined and satisfy all conditions which are needed.

\end{document}